\documentclass{elsarticle}
\journal{Computational Geometry: Theory and Applications}

\usepackage{fullpage}

\usepackage[ruled]{algorithm2e}
\usepackage{graphicx}
\usepackage{amsmath}
\usepackage{amssymb}
\usepackage[american]{babel}
\usepackage{dsfont}
\usepackage{todonotes}
\usepackage{xspace}
\usepackage{url}
\usepackage[pdfborder={0 0 0}]{hyperref} 
\usepackage{subfig}

\newtheorem{theorem}{Theorem}
\newtheorem{lemma}[theorem]{Lemma}

\newtheorem{definition}{Definition}
\newtheorem{claim}{Claim}

\def\QED{\ensuremath{{\Box}}}
\def\markatright#1{\leavevmode\unskip\nobreak\quad\hspace*{\fill}{#1}}
\newenvironment{proof}
 {\begin{trivlist}\item[\hskip\labelsep{\bf Proof.}]}
 {\markatright{\QED}\end{trivlist}}

\tolerance 1414
\hbadness 1414
\emergencystretch 1.5em  %
\hfuzz 0.3pt    %
\vfuzz \hfuzz

\widowpenalty=10000
\clubpenalty=10000

\setcounter{totalnumber}{4}

\SetAlFnt{\small}
\SetAlCapFnt{\small}
\SetAlCapNameFnt{\small}
\SetAlCapHSkip{0pt}
\IncMargin{-\parindent}



\newcommand{\old}[1]{{}}
\newcommand{\dist}{{\mbox{dist}}}

\begin{document}

\begin{frontmatter}


\title{Connecting a Set of Circles with Minimum Sum of Radii}

\author[stlu]{Erin W.~Chambers}
\ead{echambe5@slu.edu}
\author[tubs]{S{\'a}ndor P.\ Fekete\corref{corr}}
\ead{s.fekete@tu-bs.de}
\author[uw]{Hella-Franziska Hoffmann}
\ead{hrhoffma@uwaterloo.ca}
\author[kins]{Dimitri Marinakis}
\ead{dmarinak@kinsolresearch.com}
\author[usb]{Joseph~S.~B.~Mitchell}
\ead{jsbm@ams.sunysb.edu}
\author[uvic]{Venkatesh Srinivasan}
\ead{srinivas@uvic.ca}
\author[uvic]{Ulrike Stege}
\ead{ustege@uvic.ca}
\author[uvic]{Sue Whitesides}
\ead{sue@uvic.ca}

\address[stlu]{Department of Computer Science, Saint Louis University, USA}
\address[tubs]{Department of Computer Science, TU Braunschweig, Germany}
\address[uw]{David R. Cheriton School of Computer Science, University of Waterloo, Canada}
\address[kins]{Kinsol Research Inc., Duncan, BC, Canada}
\address[usb]{Department of Applied Mathematics and Statistics, Stony Brook University, USA}
\address[uvic]{Department of Computer Science, University of Victoria, Canada}

\cortext[corr]{Corresponding author}


\begin{abstract}
We consider the problem of assigning radii to a given set of 
points in the plane, such that the resulting set of disks is connected, and the sum of radii
is minimized. We prove that the problem is NP-hard in planar weighted graphs if there are upper bounds on the
radii and sketch a similar proof for planar point sets.
For the case when there are no upper bounds
on the radii, the complexity is open; we give a polynomial-time approximation scheme.
We also give constant-factor approximation guarantees for solutions with a bounded number of disks; these results
are supported by lower bounds, which are shown to be tight in some of the cases. Finally, we show that the problem is polynomially solvable if
a connectivity tree is given, and we conclude with some experimental results.

\end{abstract}

\begin{keyword}
Intersection graphs, connectivity problems, NP-hardness problems, approximation, 
upper and lower bounds.
\end{keyword}

\end{frontmatter}



\section{Introduction}
Problems of connectivity are among the most fundamental ones for many types of networks.
Typically, these arise in a geometric setting, e.g., when considering
the relation between the location
of transmitters, the range of their transmissions, and their ability to connect.
As a result, important aspects include the underlying geometry, the nature of connectivity,
and the study of corresponding cost functions.
Thus, connectivity problems bring together graph theory with computational geometry,
a combination that was always dear to Ferran Hurtado; e.g., see \cite{agh+-acgg-08} for
a study of connectivity that mixes graphs and geometry.

In this paper, we consider a natural connectivity problem, 
with a focus on the geometric aspects, arising from 
assigning ranges to a set of points, such that the resulting disk intersection graph is
connected. More precisely, we are given 
a set of points $P = \{p_1,\ldots,p_n\}$ in the plane. Each point $p_i$ is assigned a range $r_i$,
inducing a disk of radius $r_i$.
Two points $p_i$, $p_j$ are adjacent in the connectivity graph $H$,
if their disks intersect.
The {\sc Connected Range Assignment Problem} (CRA) requires an assignment of radii
to $P$, such that the objective function $R = \sum_i r_i^\alpha, \alpha=1$ is
minimized, subject to the constraint that $H$ is connected.

Problems of this type have been considered before and have natural motivations from
fields including networks, robotics, and data analysis, where 
ranges have to be assigned to a set of devices, and the total cost is given by 
an objective function that considers the sum of the
radii of disks to some exponent $\alpha$. The cases $\alpha=2$ or $3$ correspond to
minimizing the overall power. The motivation for the case $\alpha=1$ arises from
scanning the corresponding ranges with a minimum required angular resolution,
so that the scan time for each disk corresponds to its perimeter, and thus radius.

\subsection{Related Work}
There is a large body of literature on algorithmic methods for range assignment
in wireless sensor and ad-hoc networks; see 
Calinescu {\em et al.}~\cite{cmwz-sfnwn-01},
Calinescu and Wan~\cite{SymAsym},
Carmi and Katz~\cite{ck-patpl-04},
Carmi {\em et al.}~\cite{CKSS},
Caragiannis {\em et al.}~\cite{CKK06},
Lloyd {\em et al.}~\cite{LLMRR},
Wan {\em et al.}~\cite{WCLF}
for a (small) selection of aspects and variants.

In the context of clustering, Doddi \emph{et al.}~\cite{Doddi00_apxMinSumDiam}, 
Charikar and Panigraphy~\cite{Charikar04_minSumDiam}, and 
Gibson \emph{et al.}~\cite{Gibson08_minSumRadii} consider the following problems.
Given a set $P$ of $n$ points in a metric space with metric 
$d(i,j)$ and an integer $k$, partition $P$ into a set of at most $k$ clusters with
minimum sum of either (a) cluster diameters, or (b) cluster radii.  Thus, the most significant
difference with our problem is the lack of a connectivity constraint.
Doddi \emph{et al.}~\cite{Doddi00_apxMinSumDiam} provide approximation results for (a).
They present a polynomial-time algorithm, which returns $O(k)$ clusters that
are $O(\log(\frac{n}{k}))$-approximate. For a fixed $k$, they transform an instance of
their problem into a min-cost set-cover problem instance yielding a polynomial-time $2$-approximation.
They also show that the existence of a $(2-\epsilon)$-approximation would imply
$P=NP$. In addition, they prove that the problem in weighted graphs without
triangle inequality cannot be efficiently approximated within any factor, unless
$P=NP$. Note that every solution to (b) is a $2$-approximation for (a). Thus, the
approximation results can be applied to case (a) as well.
A greedy logarithmic approximation and a primal-dual based constant factor
approximation for minimum sum of cluster radii is provided by 
Charikar and Panigraphy~\cite{Charikar04_minSumDiam}. 

Another often considered setting is the one in which the coverage of a given set of base stations is required. 
In this setting, Alt \emph{et al.}~\cite{aab+-mccps-06} consider
a closely related problem of selecting disk centers and radii such that a
given set of points in the plane are covered by the disks. Like our work, they
focus on minimizing an objective function based on $\sum_i r_i^{\alpha}$ and
produce results specific to various values of $\alpha$.  The minimum sum of
radii disk coverage problem (with $\alpha = 1$) is also considered by 
Lev-Tov and Peleg~\cite{Lev-Tov05} in the context of radio networks.  Again, connectivity
is not a requirement. In a more geometric setting,
Bil{\`o} \emph{et al.}~\cite{Bilo05_geomClustering} provide approximation schemes
for the minimum size $k$-clustering problem that requires dividing the set of centers
into at most $k$ clusters with minimum cluster cost.

A lot of work has also been done on radii/range assignment problems which require the special connectivity of  ``communication''.
The work of Clementi \emph{et al.}~\cite{Clementi04_powerAssignments} considers 
minimal assignments of transmission power to devices in a wireless network such that the network stays connected.
In that context, the objective function typically considers an $\alpha > 1$ based on models of radio wave propagation.
Furthermore, in the type of problem considered by Clementi \emph{et al.} the connectivity graph is directed;
\emph{i.e.} the power assigned to a specific device affects its transmission range, but not its reception range.
This is in contrast to our work in which we consider an undirected connectivity graph.
See Fuchs~\cite{Fuchs08_hardnessRAP} for a collection of hardness results
of different (directed) communication graphs.
Carmi \emph{et al.}~\cite{Carmi05_MAST} prove that an Euclidean minimum spanning tree
is a constant-factor approximation for a variety of problems including the {\em
Minimum-Area Connected Disk Graph} problem, which equals our problem with the
different objective of minimizing the {\em area} of the {\em union} of disks,
while we consider minimizing the {\em sum} of the {\em radii} (or perimeters) of all disks.
This can apply to a hybrid robot or sensor
network system such as that described in 
Marinakis {\em et al.}~\cite{Marinakis07_aaaiHybridSelfLoc}:
Consider a situation in which a mobile robot is required to visit the region of each static network component,
\emph{e.g.,} for environmental monitoring purposes, and requires constant one-way information from at least
one static network component at all times; \emph{e.g.,} for navigational purposes.
In this case, a reasonable objective is to reduce the area covered by
the network while maintaining connectivity.

\subsection{Our Work}
In this paper we investigate a variety of algorithmic aspects of the CRA problem.
In Section~\ref{sec:cract}, we show that for a given connectivity tree, an optimal solution
can be computed efficiently. Section~\ref{sec:np-proof} gives a proof of NP-hardness for the
problem when there is an upper bound on the radii, with full details for the case of planar weighted graphs
and a sketch for planar point sets. Section~\ref{sec:lim} provides a number of
approximation results for solutions with bounded number of disks. 
In Section~\ref{sec:ptas}, we present a PTAS for the case in which there are no upper bounds on the radii of the disks.
These theoretical results are complemented by experimental
results in Section~\ref{sec:exp}. A concluding discussion with open problems is provided in Section~\ref{sec:conc}.

\section{CRA for a Given Connectivity Tree}\label{sec:cract}

In some settings, a specific set of connectivity edges is required instead of {\em any} connectivity tree.
In this section, we will show that the Connected Range Assignment problem is solvable in
polynomial time, when we are given a specific connectivity tree.
First we give a formal definition of this variant.

In the {\em Connected Range Assignment Problem for a Given Connectivity Tree (CRACT)}, we are given a finite set $P$, a metric $d$ and a spanning tree $T$ for the vertex set $P$.
We are required to find an assignment $r_i$ of radii for point $p_i$ in $P$ such that $R = \sum_i  r_i$ is minimized subject to the constraint that if $(p_i,p_j)$ is an edge in $T$, then the two corresponding disks $C_i$ and $C_j$ (with radii $r_i$ and $r_j$ respectively) intersect.

	
We call a solution for Problem CRACT an {\em optimal range assignment for $T$}.
We first state some structural properties of optimal range assignments and then, based on those,
provide a polynomial-time algorithm for Problem CRACT.
In the following, we always consider the case where $|P|$ is at least $2$ and assume that $T$ is rooted at some internal node $p_r$ of $T$.

\begin{lemma}\label{lem:zero_leaves}
Given a connectivity tree $T$ with at least three nodes,
there exists an optimal range assignment for $T$ with $r_l = 0$ for all leaves $p_l$ of $T$.
\end{lemma}
\begin{proof}
Consider a CRACT instance $(P,d,T)$.
Assume an optimal range assignment $r$ for $T$ has a leaf $p_l \in P$ with $r_l > 0$.
Then, the disk $C_l$ around $p_l$ with radius $r_l$ must intersect the disk
$C_u$ around $p_l$'s parent $p_u$ with radius $r_u$.
We extend $r_u$ to $r_u := r_u + r_l$ and set $r_l := 0$.
This does not increase $R = \sum_{p_i\in P} r_i$ and maintains a solution for $T$.
\end{proof}
%

The following lemma extends Lemma~\ref{lem:zero_leaves}. It shows that there is an optimal range assignment for which it is possible to compute the radius assigned to any point $p$ of  height $h$, given the radius assigned to its children.

\begin{lemma} \label{lem:parent_max}
Given a CRACT instance $(P,d,T)$, there is an optimal range assignment $r$ for $T$ that satisfies the following conditions:
\begin{enumerate}
\item
$r_l = 0$ for all leaves $p_l$ of $T$
\item
For any node $p_u \in P$ of height h, $1 \leq h \leq h_T$,  in $T$,
$$r_u = \max\limits_{p_c {\mbox{ \scriptsize is child of } p_u}} d(p_c,p_u)-r_c$$
\end{enumerate}
\end{lemma}

\begin{proof}

To prove that $r$ satisfying the two conditions above is an optimal range assignment, we show that any optimal range assignment can be gradually converted into assignment $r$ in a sequence of $h_T+1$ steps as follows: At the end of step $i$,  all nodes $p_u$ in $T$ with height $h$, $0 \leq h \leq i-1$, will be assigned the radius $r_u$. Furthermore, the sum of radii at the end of step $i$ is at most the sum of radii at the end of step $i-1$. 

[Step 1] Perform the transformation as described in the proof of Lemma~\ref{lem:zero_leaves}.  It is easily seen that  the condition (1) is true at the end of step 1. 

[Step $i$] Let $p_u$ be a node of height $i-1$ in $T$. Further let 
 $d_{max}(p_u) := \max d(p_c,p_u)$ over all children $p_c$ of $p_u$.
 Then we have $r_u \geq d_{max}(p_u)$.
 If $r_u > d_{max}(p_u)$, we shrink $C_u$ until it just covers its farthest child
 and extend the circle of $p_u$'s unique parent (height~$\geq i$) by the same amount. This ensures that condition (2) is satisfied for the point $p_u$. Furthermore,
 this maintains all connections induced by $C_u$ before shrinking and does not increase
 the solution value $R$.
\end{proof}

Using the observed optimal solution properties, we next describe an algorithm
which will result in an optimal range assignment for $T$.

\begin{figure}[h]
	\centering
\begin{minipage}{0.45\textwidth}
 	\includegraphics[width=\textwidth]{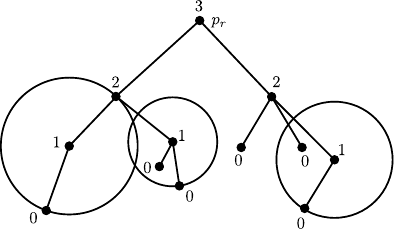}
\end{minipage}
\hfill
\begin{minipage}{0.45\textwidth}
 	\includegraphics[width=\textwidth]{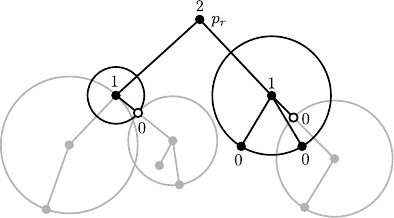}
\end{minipage}
	\caption[Sketch of Algorithm~\ref{alg:cract}.]{One step in algorithm~\ref{alg:cract}
	  for a tree $T$ of height $h_T$ (left).
	  The tree $\tilde{T}$ of height $h_T-1$ resulting from the remaining gaps is shown on the right. 
	  \label{fig:cract}}
\end{figure}

\begin{algorithm}
	\SetKwFor{ForAll}{forall}{do}{end}
	\SetKwFunction{Union}{Union}\SetKwFunction{FindCompress}{FindCompress}
	\SetKwInOut{Input}{Input}\SetKwInOut{Output}{Output}
	\SetAlgoLined

	\Input{A CRACT instance $(P,d,T)$, where $T = (P,E_T)$ of height $h_T$ with weighted edges: $d: E_T \rightarrow \mathbb{R}_+$.}
	\Output{An optimal range assignment $r: P \rightarrow \mathbb{R}_+$ for $T$}
	\BlankLine
	$h:=0$\;
	\lForAll{leaves $p_l$ in $T$}{$r_l:=0$\;}
	\While{$h\leq h_T$}{
		\ForAll{nodes $p_u$ of height $h$ in $T$}{
			$r_u:= \max\limits_{p_c\text{ child of } p_u}{\{d(p_c,p_u)-r_c\}}$\;
		}
		$h := h+1$\;
	}
        \Return{$r$}
	\caption{Algorithm solving $CRACT$\label{alg:cract}}
\end{algorithm}

\newpage
\begin{theorem} \label{t1}
CRACT  can be solved in $O(n)$ time.
\end{theorem}

\begin{proof}
	{\em 1) [Correctness].} In the range assignment $r$ computed by Algorithm~\ref{alg:cract}, a node $p_u$ at height $h$, $0 < h \leq h_T$,  in $T$ is assigned 
	a radius $r_u$ where  $r_u:= \max\limits_{p_c\text{ child of } p_u}{\{d(p_c,p_u)-r_c\}}$ and the leaves $p_l$ are assigned the radius $r_l=0$. The correctness of the algorithm now follows from Lemma~\ref{lem:parent_max}.
	  
	{\em 2) [Runtime].} As the height of any node in $T$ is unique for a fixed root,
	  the algorithm assigns radii to each node in the tree exactly once. For every node $p$
	  of height $h \geq 1$, the computation of the required radius needs time proportional
	  to its degree. During this computation, each edge in the
	  tree (of which we have exactly $n-1$) is considered exactly once in total for
	  the computation. Hence, the algorithm determines an optimal
	  range assignment in $O(n)$ steps.
	  Note that even if the height of the nodes cannot be gained directly from the
	  data structure, it can be computed via one iteration over the tree,
	   which again only requires linear time.
\end{proof}

%

\section{NP-hardness for Bounded Radii}\label{sec:np-proof}

In this section, we show that, if an upper bound of $\beta$ on the radii is also specified,  the
problem CRA becomes NP-hard using a reduction from 3-SAT. More formally, we are given a finite set $P$ of objects, the pairwise
distances of the elements in $P$ measured by metric $d$ and an upper bound $\beta$ on the radii.  The goal is to find an assignment of radii
to $P$, such that the objective function $R = \sum_i r_i$ is
minimized, subject to the constraint that $H$ is connected and $r_i \leq \beta$ for all $i$.

\begin{definition}[3-SAT]
	In logic, a {\em literal} is a single (possibly negated) variable.
	A disjunction of three literals is called a {\em clause} $\gamma_i$.
	Given a Boolean formula ${\cal F} = \gamma_1 \wedge \gamma_2 \wedge \ldots \wedge \gamma_c$ 
	that is a conjunction of $c$ clauses formed by a set of $v$ variables,
	the $3$-Satisfiability problem (3-SAT) is to decide whether there exists an assignment
	of truth values to each of the $v$ variables, such that ${\cal F}$ evaluates to true.
	The {\em variable-clause incidence graph} consists of a vertex for each variable and each clause;
	an edge exists between a clause vertex representing a clause $c_j$ and a variable vertex representing a variable
	$x_i$, iff variable $x_i$ occurs in clause $c_j$.
	A 3-SAT instance is {\em planar} if the clause-variable incidence graph is planar.
\end{definition}

It was shown by Lichtenstein~\cite{Lichtenstein82} that 3-SAT is NP-complete, even when restricted to 
planar instances. This is the basis for many NP-completeness reductions for planar graphs and for point sets.
In the following, we give a proof of NP-hardness of the problem CRA with bounded radii in the graph setting. 

\begin{theorem}\label{thm:hardness}
With radii bounded by some constant $\beta$, the problem CRA is NP-hard for planar weighted graphs.
\end{theorem}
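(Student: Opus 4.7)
The plan is to reduce from \textsc{Set Cover}. Given an instance with universe $U=\{u_1,\dots,u_n\}$, set system $\mathcal S=\{S_1,\dots,S_m\}$, and bound $k$, I would build a weighted graph $G$ on the vertex set $\{c\}\cup\{v_1,\dots,v_m\}\cup\{u_1,\dots,u_n\}$, augmented with a small pendant gadget hung off each $u_i$, and set the radius bound $\rho=1$. Edges $c v_j$ (one per set) and $v_j u_i$ (one per incidence $u_i\in S_j$) all receive weight $\rho$. The pendant gadgets are chosen, in the spirit of Lemma~\ref{l1}, so that in every optimum $r_{u_i}=0$ and the auxiliary vertices contribute nothing to the objective.

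The target equivalence I want to prove is: $\mathcal S$ has a cover of size at most $k$ if and only if the CRA instance admits a solution of value at most $k\rho$. One direction is immediate: from a cover $\mathcal C\subseteq\mathcal S$, set $r_{v_j}=\rho$ for $S_j\in\mathcal C$ and every other radius to $0$. Then $c$ joins each selected $v_j$ in $H$ via the unit-weight edge; each element vertex $u_i$ joins some selected $v_j$ because $\mathcal C$ covers it; and the whole $H$ is connected at cost $|\mathcal C|\rho$. Conversely, starting from an optimal CRA assignment, I would first apply an exchange argument to push $r_c$ and each $r_{u_i}$ down to $0$ without raising the cost, and then a second rounding argument to snap every $r_{v_j}$ to $\{0,\rho\}$; the set of indices with $r_{v_j}=\rho$ must then cover $U$, since otherwise some $u_i$ has no $H$-neighbor.

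The main obstacle, and where the construction has to be done carefully, is the gadget design. The pendants and the choice of edge weights must simultaneously rule out three kinds of unwanted optima: those that use $r_c>0$ to substitute for a set vertex, those that use fractional $r_{v_j}$, and those that exploit unintended shortcuts in the shortest-path metric $d(\cdot,\cdot)$, which for arbitrary weighted graphs can differ substantially from direct edge weights. A clean way to block the last is to make every edge have weight $\rho$ and to verify that $d(\cdot,\cdot)$ coincides with $\rho$ times the hop distance. Once the graph version is nailed down, the geometric version referred to in the text follows by a \textsc{Planar 3SAT}-style embedding, replacing each piece of the gadget by a point configuration whose pairwise Euclidean distances realise the required edge weights, together with the usual variable, clause, wire, and crossover gadgets.
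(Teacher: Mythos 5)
There is a genuine gap, and it already shows up in the one step you state concretely. In your forward direction you set $r_{v_j}=\rho$ only for the chosen sets and every other radius to $0$, and claim the resulting $H$ is connected at cost $k\rho$. But CRA requires \emph{every} vertex of the graph to lie in the connected graph $H$, including the set vertices $v_{j'}$ for sets \emph{not} in the cover and the auxiliary pendant vertices. A non-chosen $v_{j'}$ has all its $G$-neighbors ($c$ and its elements) at distance $\rho$ and all of them carry radius $0$, while every chosen $v_j$ is at distance $2\rho$ from it; since radii are capped at $\rho$, the vertex $v_{j'}$ is isolated in $H$. So the claimed equivalence ``cover of size $\le k$ iff CRA value $\le k\rho$'' fails in the easy direction. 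Any repair (e.g., additionally setting $r_c=\rho$ so that all set vertices attach to $c$, and moving the threshold to $(k+1)\rho$) changes the structure of optimal solutions and directly contradicts your planned exchange argument, which wants to push $r_c$ down to $0$: in the repaired construction the cheap way to connect the many ``useless'' set vertices is precisely $r_c=\rho$. The same oversight affects the pendant gadgets: they are vertices too, so they must be connected in $H$, which forces positive radius on them or on $u_i$ and hence a cost you never account for; moreover, hanging pendants off $u_i$ does not force $r_{u_i}=0$ --- Lemma~\ref{l1} says leaves can be given radius $0$ by charging their parent, which is the opposite of what you need, and the gadget that is supposed to rule out fractional radii and substitutions by $r_c$ is exactly the part you leave unspecified while acknowledging it is ``the main obstacle.'' The subsequent rounding claims (snap each $r_{v_j}$ to $\{0,\rho\}$, push all $r_{u_i}$ to $0$ even when an element vertex is used as an internal connector between two set vertices) are asserted, not argued, and they are where all the difficulty of the reduction lives.

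Two further remarks. First, your closing sentence misapplies the geometric step: a \textsc{Planar 3SAT}-style embedding is available for the paper's construction because it starts from a SAT formula; a \textsc{Set Cover} incidence structure is not planar in general, and arbitrary weighted-graph metrics cannot simply be realized by Euclidean point configurations, so that remark does not rescue or extend your reduction (though for the statement at hand only the weighted-graph version is needed). Second, for comparison, the paper's own sketch reduces from \textsc{3Sat}: each variable is a closed loop of points at pairwise distance $\rho$ with auxiliary points (equilateral triangles) enforcing that either all ``odd'' or all ``even'' loop points take radius $\rho$, and each clause is a star covered by one radius-$\rho$ circle that connects to the rest iff a satisfying literal's loop circle reaches it. That construction builds the two-valued (``all odd or all even'') structure directly into the gadgets, which is exactly the kind of rigidity your proposal still needs to supply before the reverse direction can be made to work.
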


\begin{figure}[h!t]
\centering
\includegraphics[width=0.8\columnwidth]{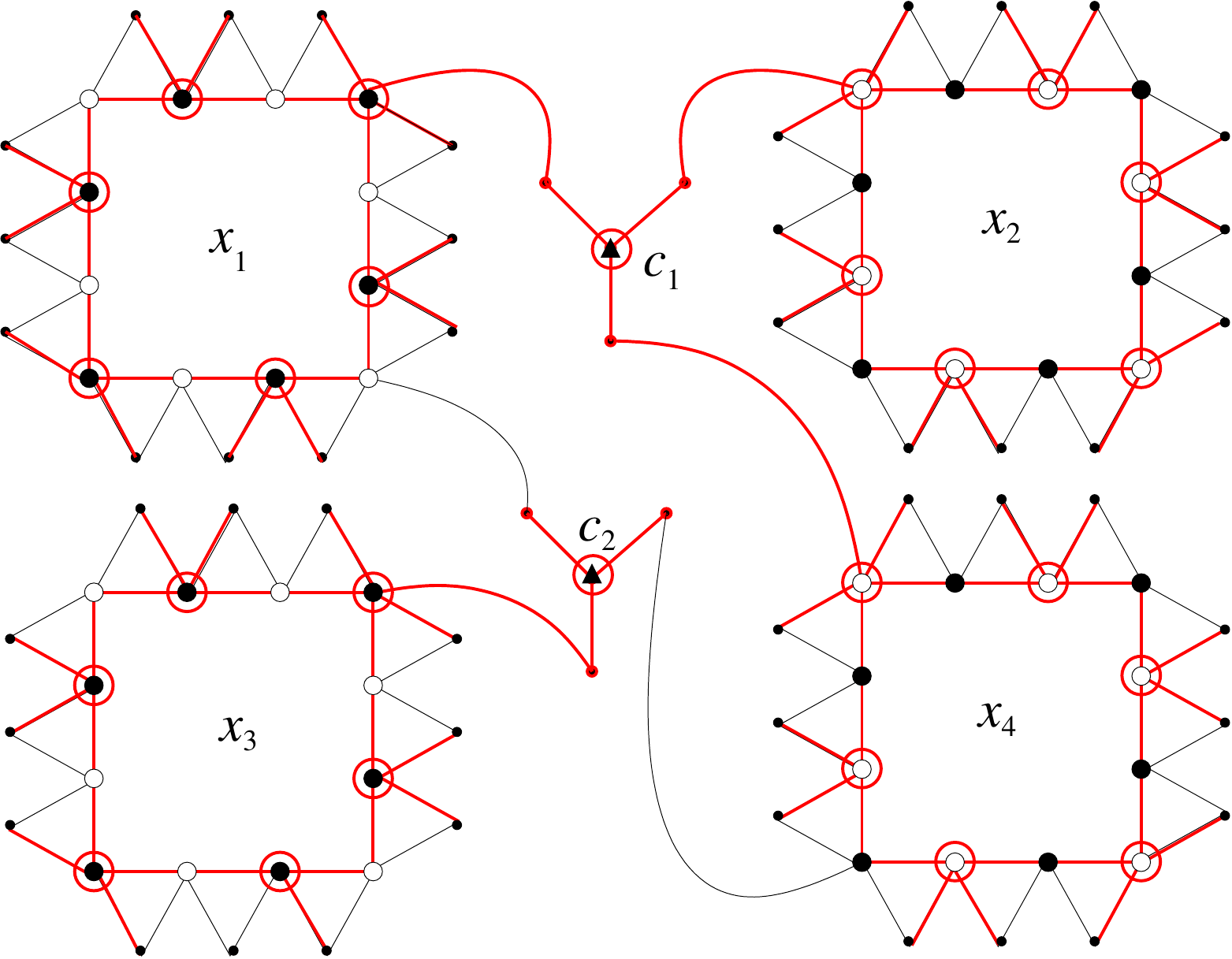}
\caption[Basic gadget construction of the NP-hardness reduction]{
A graph representation for the 3SAT instance 
$(\overline{x}_1\vee {x}_2\vee\overline{x}_3) \wedge
({x}_1\vee \overline{x}_2\vee\overline{x}_4)$.
All displayed edges have length $\beta$. ``True'' and ``false'' 
vertices are marked in bold white or black, respectively; auxiliary vertices are indicated by small dots.
Clause vertices are indicated by small triangles, while the subgraphs for variables are labeled by variable names.
Marked in red are the vertices assigned with a radius of $\beta$, as well as the corresponding connecting edges.
\label{nphard}
}
\end{figure}
\begin{proof}
See Fig.~\ref{nphard} for the main ideas behind the construction.
The proof uses a reduction from {\sc Planar 3Sat}.
Variables are represented by closed ``loops'' at distance $\beta$.
Additional ``connectivity'' edges ensure that
all variable gadgets are connected.
Each clause is represented by a star-shaped set of four points which is connected to each corresponding variable loop over a $\beta$-edge. 
We claim that there is an optimal solution for the constructed CRA instance ``$I_{CRA}(I_{3Sat})$'' with cost $R = v\cdot 6\beta + c\cdot \beta$
if and only if there is a satisfying variable assignment for the corresponding {\sc 3Sat} instance $I_{3Sat}$ with $v$ variables and $c$ clauses.\\

\noindent {\bf ``$\Rightarrow$'':} \quad
	To verify that claim we first state some properties of solutions to an $I_{CRA}(I_{3Sat})$ CRA instance:
	\begin{enumerate}
		\item {\em There are exactly two different feasible connected radii
			assignments of value $6\beta$ for each variable loop}.

			To make this clear see the triangle in Fig.~\ref{fig:np1}
			which illustrates that only ``non-fractional'' solutions provide the required sum of radii.
			More precisely, if more than one disk is used, the sum of radii either exceeds $\beta$
			or one point remains uncovered, as the sum of each two radii must be greater or equal to $\beta$.
			For symmetric reasons this argument yields two valid ``integral'' $6\beta$ solutions for each entire variable loop.
		\begin{figure}[h!]
			\centering
			\includegraphics[width=.8\columnwidth]{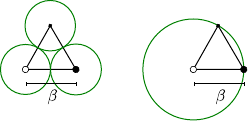}
			\caption[The variable gadget coverage]{The triangular parts of the variable gadgets. A coverage by
			multiple disks (left) always costs more the a coverage with one disk (right).\label{fig:np1}}
		\end{figure}
		\item {\em The four nodes of a clause can be connected to the rest with cost at most $\beta$
			only if there is at least one satisfying variable}.

			Since every clause star center $p_c$ can only be covered by disks around points
			that are not part of the variable loops, the only way to connect the star centers to
			the remainder with cost at most $\beta$ for each clause, is to use exactly one $\beta$-disk 
			centered at one of the four corresponding clause nodes. Using $p_c$ as the $\beta$-disk center
			we must have a satisfying variable loop that closes the gap (of length $\beta$) between the clause
			and the remainder of the graph. Moreover, if we use a different node of the clause as the
			$\beta$-disk center, the remaining two uncovered star nodes must also be covered by disks from the
			variable loops; see Fig.~\ref{fig:np2} for illustration.
		\begin{figure}[h!]
			\centering
		\hspace{1cm}
		\begin{minipage}{0.45\linewidth}
			\includegraphics[width=\linewidth]{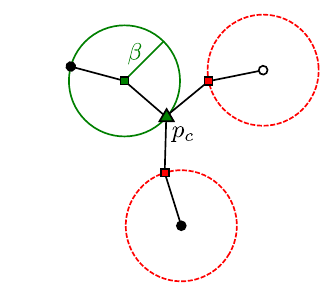}
		\end{minipage}
		\hspace{1cm}
		\begin{minipage}{0.45\linewidth}
			\includegraphics[width=\linewidth]{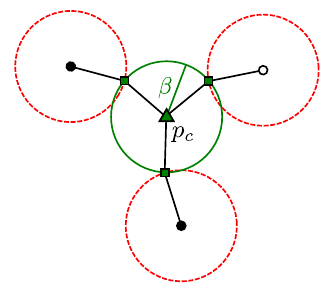}
		\end{minipage}
			\caption[The clause gadget coverage]{We cover each clause star by a $\beta$-disk centered
			    at one of the inner nodes.\label{fig:np2}}
		\end{figure}
	\end{enumerate}
	From the first property we know that the loop construction directly provides a lower bound of $v \cdot 6\beta$
	for the costs of the total variable loop coverage.
	Thus, only costs of $c\cdot \beta$ remain for the coverage of the clauses with total costs $R = v\cdot 6\beta + c\cdot \beta$.
	Combined with the second property we conclude that there is a solution to $I_{CRA}(I_{3Sat})$ with cost
	$R = v\cdot 6\beta + c\cdot \beta$ only if there is a satisfying variable assignment for $I_{3Sat}$.\\

	\noindent{\bf ``$\Leftarrow$'':}\quad
	Given a solution to $I_{3Sat}$. We have seen that any satisfying variable assignment directly encodes
	a variable loop coverage in $I_{CRA}(I_{3Sat})$ with cost $v\cdot(6\beta)$.
	If there is a variable that is not needed (i.e., one for which the value does not impact satisfiability of the clauses),
	it can be either treated as a normal ``true'' or ``false'' variable and is automatically connected to
	another variable loop via the ``connectivity edges''.
	The clause stars can each be covered with a $\beta$-disk centered at the clause star center $p_c$. Thereby, they are all connected to the
	remainder as we have at least one connecting (satisfying) variable loop for each clause. This construction costs $c\cdot \beta$.
	In total, for every solution to $I_{3Sat}$ we get a solution with the desired costs for $I_{CRA}(I_{3Sat})$.
\end{proof}

The main reason why this proof cannot be extended to the case of unbounded radii is that we consider the objective function $Q_1 = \sum r_i$
which does not prefer smaller disks as any $Q_\alpha$ with $\alpha > 1$ does.
That is, although we can construct some local loops that allow only two different integral optimal solutions,
we are unable to see a way to connect these gadgets guaranteeing the absence of an optimal global $1$-disk solution.\\

For the case of bounded radii, we can also show an NP-hardness proof for a geometric setting; again the proof uses
a reduction to planar 3-Satisfiability. For this, we construct variable gadgets that each consist of disk-shaped
point clusters. At the boundary of these clusters we uniformly arrange points such that exactly every third point is used as a disk center in an optimal
solution. Chains of $\beta$-distant points ensure interior connectivity of the variable gadgets.
(This is completely analogous to the construction for the proof of NP-completeness of packing squares into a polygonal domain presented as Theorem~1 in the paper
by Baur and Fekete~\cite{bf-agdp-01}.)
The clause gadgets are constructed as in the graph case. Additional point chains connect them to the corresponding variable loops.

\section{Solutions with a Limited Number of Circles}\label{sec:lim}
A natural class of solutions arises when only a limited number of $k$ disks may have positive
radius. In this section we show that these {\em k-disk solutions} already
yield good approximations; we start by giving a class of lower bounds.

We first observe that the number of disks needed for an optimal solution cannot be bounded by a constant number in general.

\begin{theorem}\label{thm:k_needed}
	For any integer $k \geq 1$, there is an instance of $n = k+3$ points in the plane
	that need exactly $k+1$ disks to be covered optimally.
\end{theorem}


\begin{figure}[h!]
	\centering
	\includegraphics[width=.8\columnwidth]{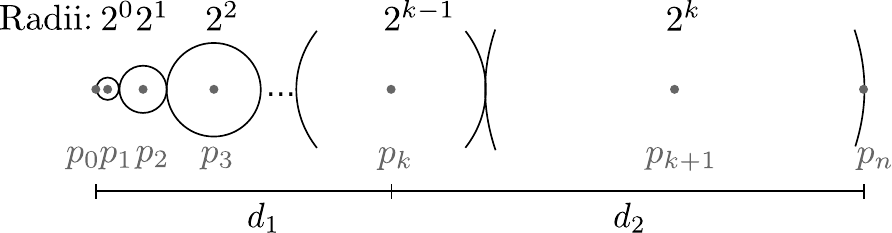}
	\caption{A class of CRA instances that need $k+1$ disks in an optimal solution.\label{k-disk-ex}}
\vspace*{-.2cm}
\end{figure}

\begin{proof}
Consider the example in Fig.~\ref{k-disk-ex} with $n=k+3$ points. The provided solution $R^*$ with $k+1$ disks is optimal,
as $R^* := \sum{r_i^*} = \frac{\dist(p_0,p_n)}{2}$.  To show that the best $k$-disk solution is not optimal,  we make the following observation.
For any integer $k \geq 1$, we have $d_1=d(p_0,p_k) = 2 \cdot \sum_{i=0}^{k-2}{2^i} + 2^{k-1} < 2 \cdot 2^{k} + 2^{k-1} = d(p_k, p_n)= d_2$. 
Therefore, any $k$-disk solution that does not use point $p_{k+1}$ cannot be optimal, as the sum of the radii would be strictly greater than $R^* $.
On the other hand, if the $k$-disk solution contains a disk centered at $p_{k+1}$, it must be of radius $r_{k+1}^*$ in 
order to be optimal. This results in a set of $n-1$ points to be connected optimally using $k-1$ disks. Hence, using induction on $k$, we
conclude that exactly $k+1$ disks are needed.  
\end{proof}

In the following we establish upper and lower bounds on the approximation ratios of $1$- and $2$-disk solutions for CRA.

\begin{lemma}\label{l2}
Let ${\cal P}$ be a longest (simple) path in a connectivity graph corresponding to an optimal solution $R^*$,
 and let $e_m$
 be an edge in ${\cal P}$ containing the midpoint of ${\cal P}$.
Then $R^*=\sum r_i^* \geq \max\{\frac{1}{2}|{\cal P}|, |e_m|\}$.
\end{lemma}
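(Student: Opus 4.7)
The plan is to establish the two lower bounds $\sum r_i \geq \tfrac{1}{2}|\mathcal{P}|$ and $\sum r_i \geq |e_m|$ separately, both directly from the definition of the connectivity graph; the $\max$ then follows.

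For the first bound, I would sum the connectivity condition over the edges of $\mathcal{P}$. For every edge $f_{ij}$ of $\mathcal{P}$, the fact that $f_{ij}$ lies in the connectivity graph gives $\dist(p_i,p_j)\le r_i+r_j$. Since $|\mathcal{P}|$ is the sum of the lengths of its edges, summing over all edges yields
\[
|\mathcal{P}| \;=\; \sum_{f_{ij}\in \mathcal{P}} \dist(p_i,p_j) \;\le\; \sum_{f_{ij}\in \mathcal{P}} (r_i + r_j).
\]
Because $\mathcal{P}$ is a simple path, each vertex is incident to at most two edges of $\mathcal{P}$, so its radius is counted at most twice on the right. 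Hence the right-hand side is bounded by $2\sum_{p_v\in \mathcal{P}} r_v$, which in turn is at most $2\sum_i r_i$ since radii are nonnegative. Dividing by two gives $\sum_i r_i \ge \tfrac{1}{2}|\mathcal{P}|$.

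For the second bound, I would simply use that $e_m$ is itself an edge of the connectivity graph. Writing $e_m = f_{ij}$, the connectivity property gives $|e_m| = \dist(p_i,p_j) \le r_i + r_j$, and because every other radius is nonnegative, $r_i + r_j \le \sum_k r_k$. Thus $\sum_i r_i \ge |e_m|$. Note that the hypothesis that $e_m$ contains the midpoint of $\mathcal{P}$ is not used in deriving this bound itself; it merely identifies which specific edge along $\mathcal{P}$ will be referenced, which matters in the intended applications of the lemma (since $|e_m|$ can exceed $\tfrac{1}{2}|\mathcal{P}|$, in which case the second term in the $\max$ dominates).

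I do not anticipate a real obstacle here: both bounds reduce to the single-edge inequality $\dist(p_i,p_j)\le r_i+r_j$, used once per edge of $\mathcal{P}$ for the halving bound and once for $e_m$ alone for the second bound. The only bookkeeping subtlety is to observe that the simple-path structure prevents any radius from being double-counted more than twice when summing over the edges of $\mathcal{P}$, which is immediate. Combining the two inequalities yields $\sum_i r_i \ge \max\{\tfrac{1}{2}|\mathcal{P}|,\,|e_m|\}$, completing the proof.
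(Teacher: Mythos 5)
Your proof is correct and follows essentially the same route as the paper, which simply notes that the bounds follow from the edge condition $r_u+r_v\geq |e|$ for every edge of ${\cal P}$; you spell out the two resulting inequalities (summing over the edges of the simple path for the $\frac{1}{2}|{\cal P}|$ bound, and applying the condition to $e_m$ alone for the other) exactly as intended.
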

This follows directly from the following simple property of the connectivity graph: In this graph,  for any edge $e = p_up_v$ with length $|e|$ in ${\cal P}$, $r_u + r_v \geq |e|=d(p_u,p_v)$.

\begin{theorem}\label{thm:up1}
A best $1$-disk solution for CRA is a $\frac{3}{2}$-approximation, even in the graph version of the problem.
\end{theorem}

\begin{proof}
Consider a longest path ${\cal P} = (p_0,\ldots ,p_k)$ of length
$|{\cal P}| = d_{\cal P}(p_0,\ldots ,p_k) := \sum_{i=0}^{k-1}{|p_ip_{i+1}|}$ in the connectivity graph of an optimal solution.
Let $R^*~:=~\sum{r_i^*}$ be the cost of the optimal solution, and $e_m~=~p_ip_{i+1}$
as in Lemma \ref{l2}.
Let $\bar d_i := d_{\cal P}(p_i,\ldots ,p_k)$ and $\bar d_{i+1} := d_{\cal P}(p_0,\ldots ,p_{i+1})$. Then
$d_{\min}:=\min\{\bar d_i,\bar d_{i+1}\} \leq \frac{\bar d_i+\bar d_{i+1}}{2}=\frac{d_{\cal P}(p_0,\ldots ,p_i) + 2|e_m| +
d_{\cal P}(p_{i+1},\ldots ,p_k)}{2} = \frac{|{\cal P}|}{2} + \frac{|e_m|}{2} \leq R^* +
\frac{R^*}{2} = \frac{3}{2}R^*$. So a single disk $C$ with a radius of $d_{\min}\leq\frac{3}{2}R^*$ around
the point in $P$ that is nearest to the middle of path ${\cal P}$ covers
${\cal P}$. Without loss of generality, let this center point of $C$ be $p_{i+1}$, then $d_{min}$ = $\bar d_{i+1}$.
We claim that the chosen disk $C$ also covers all points in $P$.
This can be seen by assuming the existence of a point $p' \in P$ that is not
covered by $C$; see Fig.~\ref{fig:up1_2}.
\begin{figure}[h!]
	\centering
	\includegraphics[width=0.6\linewidth]{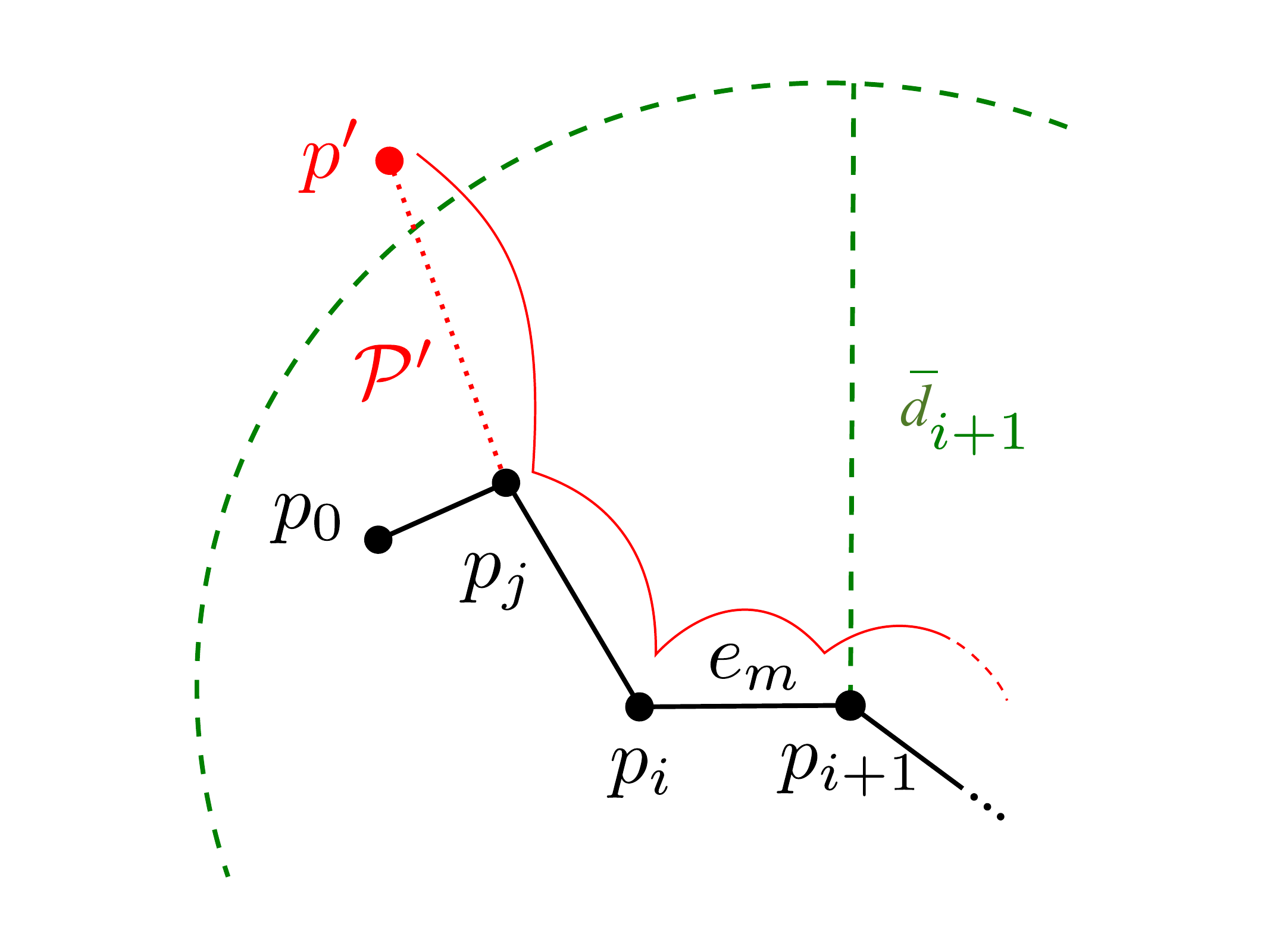}
	\caption[No uncovered point remains from the $1$-disk solution]{The presence of an uncovered
	    point $p'$ (red) implies the existence of a longer path ${\cal P'} \cup (p_j,\ldots,p_k)$ (sketched by the curved red line).}
	\label{fig:up1_2}
\end{figure}
Because the optimal solution must induce a connected connectivity graph, there is a path ${\cal P'}$
from $p'$ to a node $p_j$ in ${\cal P}$ that has no edge in common with ${\cal  P}$, i.e. $V({\cal P'}) \cap V({\cal P}) = \{p_j\}$.
Without loss of generality, let $p_j$ lie between $p_0$ and $p_i$ in ${\cal P}$.
If $p'$ is not covered by $C$, the path
$\cal P'$ must be longer than the path segment $(p_0, \ldots , p_j)$ of ${\cal P}$.
We get a path ${\cal P'} \cup (p_j,\ldots,p_k)$ with length greater than $|{\cal P}|$, which contradicts the maximality of ${\cal P}$.
\end{proof}

\begin{figure}[h!]
	\centering
	\includegraphics[width=0.4\columnwidth]{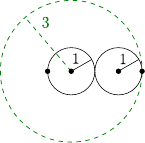}
	\caption{A lower bound of $\frac{3}{2}$ for 1-disk solutions.\label{1_disk_strict}}
\vspace*{-0.2cm}
\end{figure}

Fig.~\ref{1_disk_strict} shows that this bound is tight. There is a 2-disk solution of cost 2 while the best 1-disk solution has cost 3. We now show that using two disks yields an
even better approximation factor. 

\begin{theorem}\label{th:43}
A best $2$-disk solution for CRA is a $\frac{4}{3}$-approximation, even in the graph version
of the problem. 
\end{theorem}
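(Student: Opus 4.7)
The plan is to adapt the longest-path argument of the $1$-circle proof. Let $\mathcal{P} = (p_0, \ldots, p_k)$ be a longest simple path in an optimal connectivity graph, let $L = |\mathcal{P}|$, and for each $j$ set $\ell_j := d_\mathcal{P}(p_0, \ldots, p_j)$ and $\rho_j := d_\mathcal{P}(p_j, \ldots, p_k)$, so that $\ell_j + \rho_j = L$. Lemma~\ref{l2} gives $R^* \geq L/2$, and the same per-edge inequality $r_u + r_v \geq |e|$ that underlies it yields $R^* \geq |e|$ for every edge $e$ of $\mathcal{P}$. Both circle centers will sit on $\mathcal{P}$, and I will split on the length of the longest edge.

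In Case A, some edge $e^* = p_u p_v$ of $\mathcal{P}$ satisfies $|e^*| \geq L/3$. I take $c_1 := p_u$ and $c_2 := p_v$ and choose $r_1, r_2 \geq 0$ with $r_1 \geq \ell_u$, $r_2 \geq \rho_v$, and $r_1 + r_2 \geq |e^*|$; since $\ell_u + \rho_v = L - |e^*|$, these constraints are jointly feasible and the minimum sum is $\max(L - |e^*|, |e^*|)$. If $|e^*| \geq L/2$, this equals $|e^*| \leq R^* \leq \tfrac{4}{3}R^*$; if $L/3 \leq |e^*| < L/2$, it equals $L - |e^*| \leq 2L/3 \leq \tfrac{4}{3}R^*$. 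Vertices of $\mathcal{P}$ to the left of $e^*$ sit in $C_1$ and those to the right in $C_2$; an off-path $p$ escaping both circles would give $d(p,c_1) + d(p,c_2) > r_1 + r_2 \geq |e^*|$, so (mirroring the $1$-circle proof) the detour $p_0 \to \cdots \to p_u \to p \to p_v \to \cdots \to p_k$ would be a simple path strictly longer than $\mathcal{P}$, a contradiction.

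In Case B, every edge of $\mathcal{P}$ has length strictly less than $L/3$. Then the consecutive gaps in $0 = \ell_0 < \ell_1 < \cdots < \ell_k = L$ are all below $L/3$, so the interval $[L/3, 2L/3]$, which has length $L/3$, must contain at least one $\ell_a$. A single circle at $c := p_a$ with radius $r := \max(\ell_a, L - \ell_a) \leq 2L/3 \leq \tfrac{4}{3}R^*$ covers all of $\mathcal{P}$ by construction and covers the remaining points of $P$ by exactly the longer-path argument used in the $1$-circle theorem; since a $1$-circle solution is automatically a $2$-circle solution, this closes the case.

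The step I expect to require the most care is the off-path coverage: verifying that the detour through an uncovered vertex really forms a simple path in the connectivity graph is exactly the delicate point hidden inside the $1$-circle proof, and I would resolve it in the same way here.
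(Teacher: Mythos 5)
Your construction and case split essentially mirror the paper's proof: when no vertex of $\mathcal{P}$ is available as a ``middle'' center you place two circles at the endpoints of a long edge of $\mathcal{P}$ with radii dominating the two arm lengths, and otherwise a single circle of radius at most $\frac{2}{3}|\mathcal{P}|$ suffices; your $\max(L-|e^*|,|e^*|)$ formula merges the paper's Cases 2a/2b, and your inequality $R^*\geq|e|$ for an arbitrary edge of $\mathcal{P}$ is the same per-edge bound that underlies Lemma~\ref{l2}. All the numerical estimates are correct, and your two cases are exhaustive.

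The one step that does not hold as written is exactly the one you flagged: the detour $p_0\to\cdots\to p_u\to p\to p_v\to\cdots\to p_k$ is in general \emph{not} a path in the connectivity graph of the optimal solution, since nothing guarantees that $p$ is adjacent there to $p_u$ or $p_v$; maximality of $\mathcal{P}$ only excludes longer simple paths in that graph, so the inequality $d(p,c_1)+d(p,c_2)>|e^*|$ by itself gives no contradiction. The repair (which is also what the terse ``otherwise there would be a longer path'' in the $1$-circle proof hides) goes through connectivity of the optimal solution: take a simple path in the optimal connectivity graph from an uncovered point $q$ to $\mathcal{P}$ and let $p_j$ be the first vertex of $\mathcal{P}$ it meets; since $p_up_v$ is an edge of $\mathcal{P}$, either $j\leq u$ or $j\geq v$. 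Say $j\leq u$; then the prefix from $q$ to $p_j$ has length at least $d(q,p_j)\geq d(q,p_u)-d(p_u,p_j)>r_1-(\ell_u-\ell_j)\geq\ell_j$, so replacing the arm $p_0,\ldots,p_j$ of $\mathcal{P}$ by this prefix yields a simple path strictly longer than $\mathcal{P}$, a contradiction (the case $j\geq v$ is symmetric, using $r_2\geq\rho_v$). Note this is precisely where your constraints $r_1\geq\ell_u$, $r_2\geq\rho_v$ and the adjacency of the two centers are needed. With this substitution, and the same splice argument for your Case B center $p_a$ of radius $\max(\ell_a,L-\ell_a)$, your proof is complete---and then it is actually more explicit about off-path coverage than the paper's own Case 2, which only asserts that the path itself is covered.
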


\begin{proof}
Let ${\cal P} = (p_0,\ldots,p_k)$ be a longest path of length
$|{\cal P}| = d_{\cal P}(p_0,\ldots ,p_k) := \sum_{i=0}^{k-1}{|p_ip_{i+1}|}$ in the connectivity graph of an optimal solution with radii $r_i^*$.
Then by Lemma~\ref{l2}, $R^* := \sum r_i^* \geq \frac{1}{2}|{\cal P}|$.
We distinguish two cases;
see Fig.~\ref{fig:43}.

	{\bf Case 1.} There is a point $x$ on ${\cal P}$ at a distance of at
least $\frac{1}{3}|{\cal P}|$ from both endpoints. Then there is a $1$-disk 
solution that is a $\frac{4}{3}$-approximation.

	{\bf Case 2.} There is no such point $x$. 
Then two disks are needed. One of them is placed
at a point in the first third of ${\cal P}$, and the other disk 
is placed at a point in the last third of ${\cal P}$.
Let $e_m=p_ip_{i+1}$ be
defined as in Lemma~\ref{l2}. Further, let $d_i := d_{\cal P}(p_0,\ldots ,p_i)$,
and let $d_{i+1} := d_{\cal P}(p_{i+1},\ldots ,p_k)$.
Then $|e_m| = |{\cal P}|-d_i-d_{i+1}$ and $d_i, d_{i+1} < \frac{1}{3}|{\cal P}|$.

	\begin{figure}[h!]
		\centerline{\includegraphics[width=0.8\linewidth]{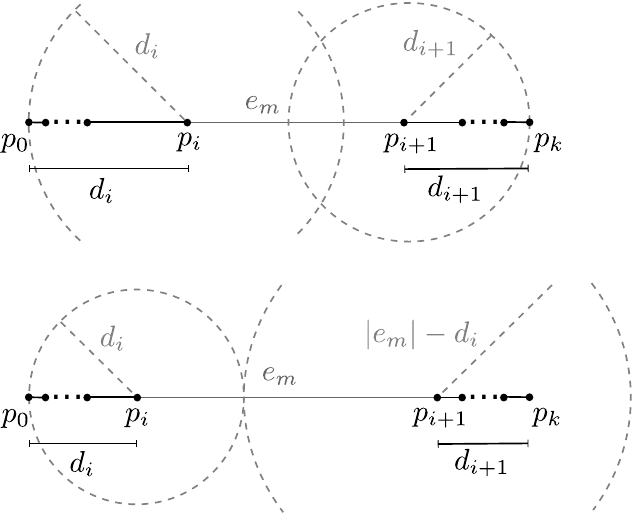}}
		\caption[A best 2-disk solution is always a $\frac{4}{3}$-approximate solution.]
			{Two $\frac{4}{3}$-approximate $2$-disk solutions (light grey, dashed):
			(Top) Case 2a, when $|e_m| < d_i + d_{i+1}$;
			(bottom) Case 2b, where $|e_m| > d_i + d_{i+1}$.}
                \label{fig:43}
	\end{figure}
	
	{\bf Case 2a.} If $|e_m| < \frac{1}{2}|{\cal P}|$
then $d_i + d_{i+1} = |{\cal P}| - |e_m| > \frac{1}{2}|{\cal P}| > |e_m|$.
Set $r_i := d_i$ and $r_{i+1} := d_{i+1}$, then the path is covered.
Analogously to the proof of Theorem~\ref{thm:up1}, we conclude that there is no
point $p' \in P$ uncovered by the chosen disks, as there will otherwise be a
longer path than ${\cal P}$.
Since $d_i, d_{i+1} < \frac{1}{3}|{\cal P}|$ we have
$r_i + r_{i+1} = d_i + d_{i+1} < \frac{2}{3}|{\cal P}| \leq \frac{4}{3}R^*
$
and the claim holds.
	
	{\bf Case 2b.} Otherwise, if $|e_m| \geq \frac{1}{2}|{\cal P}|$
then $d_i + d_{i+1} 
\leq \frac{1}{2}|{\cal P}| \leq |e_m|$. 
Assume $d_i \geq d_{i+1}$. Choose $r_i := d_i$ and $r_{i+1} := |e_m|-d_i$.
As $d_{i+1} \leq |e_m| - d_i$ the path ${\cal P}$ is covered. Again the existence of an uncovered point contradicts the maximality of ${\cal P}$. Finally,
 $r_i + r_{i+1} = d_i + (|e_m| - d_i) = |e_m|$,  which is
the lower bound and thus the range assignment is optimal. 
\end{proof}

If all points of $P$ lie on a straight line, the approximation ratio
for two disks can be improved. We first observe an interesting property of optimal solutions in this case.

\begin{lemma} \label{overl_lem}
	Let $P$ be a set of points on a straight line. Then there is a non-overlapping
optimal solution, i.e., one in which all disks have disjoint interior.
\end{lemma}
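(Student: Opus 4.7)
\noindent\emph{Proof plan.}
The plan is to proceed by contradiction: I would assume that an optimal solution contains a pair of overlapping circles and derive a cheaper feasible solution, contradicting optimality. Since $P$ lies on a line, each circle $C_i$ corresponds to the interval $I_i=[p_i-r_i,\,p_i+r_i]$, and the intersection graph of these intervals is connected if and only if the union $U=\bigcup_i I_i$ is a single connected interval $[L,M]$; this interval must contain $P$. Summing interval lengths yields the identity $2\sum_i r_i \ge |U|$, with equality precisely when the intervals have pairwise disjoint interiors, so any overlapping feasible solution satisfies $\sum_i r_i > |U|/2$. One may additionally assume that in an optimum no interval is strictly contained in another, since otherwise zeroing the contained circle's radius would reduce cost without losing feasibility.

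Let the positive-radius centers of the optimum be $p_{i_1}<\cdots<p_{i_k}$. The main construction replaces the original radii by a ``touching'' configuration on the same centers: impose $r'_{i_j}+r'_{i_{j+1}}=d_j:=p_{i_{j+1}}-p_{i_j}$ for $j=1,\ldots,k-1$. This linear system has exactly one degree of freedom, parameterized by $s=r'_{i_1}$, and its total cost evaluates to $(p_{i_k}-p_{i_1}+r'_{i_1}+r'_{i_k})/2$, which is exactly half the length of the resulting (non-overlapping) union. If $s$ can be chosen so that all $r'_{i_j}\ge 0$ and the coverage requirements $p_{i_1}-r'_{i_1}\le\min P$ and $p_{i_k}+r'_{i_k}\ge\max P$ both hold, then the candidate is a non-overlapping feasible solution with cost at most $|U|/2$. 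Under the overlap hypothesis this cost is strictly less than $\sum_i r_i$, contradicting optimality and giving the lemma.

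The main obstacle is that this touching system can be infeasible on the given centers: the parity-alternating recursion $r'_{i_j}=(-1)^{j-1}s+\alpha_j$ can force contradictory non-negativity bounds when consecutive gaps $d_j$ vary sharply, and separately the coverage constraints can be incompatible with touching when $p_{i_1}-\min P$ or $\max P-p_{i_k}$ is large. My plan for this case is to modify the candidate center set itself: if some $p_{i_j}$ is ``sandwiched'' in the sense that $I_{i_j}\subseteq I_{i_{j-1}}\cup I_{i_{j+1}}$, then setting $r_{i_j}=0$ preserves both coverage and connectivity (the two neighbours already overlap across $p_{i_j}$) while strictly decreasing cost, so the original configuration was not optimal; otherwise, swapping a boundary center $p_{i_1}$ or $p_{i_k}$ for a more extreme point of $P$ widens the span and relaxes the touching constraints. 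Each such modification either strictly reduces cost (a direct contradiction with optimality) or decreases the number of positive-radius centers, so induction on $k$, with the trivial base case $k=1$, terminates in a non-overlapping configuration realizing the optimum.
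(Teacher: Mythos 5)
Your overall strategy is to prove the lemma by contradiction: assume an optimal solution contains two overlapping circles and produce a strictly cheaper feasible solution. This would establish the stronger claim that \emph{every} optimal solution is non-overlapping, and that claim is false, so the approach cannot be repaired. Take $P=\{0,1,2,3\}$ with radii $r_0=r_3=0$, $r_1=r_2=1$. This is feasible (the intervals $[0,2]$ and $[1,3]$ intersect, and the points $0$ and $3$ lie in them) and has cost $2$; it is also optimal, since with total cost below $2$ none of $r_0+r_2\ge 2$, $r_1+r_3\ge 2$, $r_0+r_3\ge 3$ can hold, so connecting the point $0$ forces $r_0+r_1\ge 1$, connecting the point $3$ forces $r_2+r_3\ge 1$, and these two constraints already sum to $2$. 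Yet the circles at $1$ and $2$ overlap on $(1,2)$. In particular your key target is unattainable on this instance: the union of this optimal solution is $U=[0,3]$, and your plan is to build a feasible non-overlapping solution of cost at most $|U|/2=3/2<2=R^*$, which cannot exist. The flaw is that $|U|$ is measured for the overlapping solution, whose union can be strictly shorter than the union of every feasible non-overlapping solution (here any non-overlapping optimum, e.g.\ $r_0=r_2=1$, has a union of length $4$). The fallback moves do not rescue the logic: the ``sandwiched circle'' reduction strictly decreases cost and hence can never apply to an optimum, the ``swap a boundary center'' move neither reduces cost nor the number of positive radii in general, and, more fundamentally, no sequence of modifications can ever yield something cheaper than an optimum, which is exactly what your contradiction requires.

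The lemma asserts only that \emph{some} optimal solution is non-overlapping, and the paper proves this by a cost-preserving (not cost-decreasing) exchange argument: sweeping along the line, whenever $C_i$ and $C_{i+1}$ overlap, decrease $r_{i+1}$ so that $r_i+r_{i+1}=\dist(p_i,p_{i+1})$ and enlarge $C_{i+2}$ by the same amount, pushing all slack to the outermost circle; the final overlap is then removed by splitting the outermost radius at a point of $P$ on its boundary (such a point exists, since otherwise that circle could be shrunk, contradicting optimality). On the example above this transforms $(0,1,1,0)$ into $(0,1,0,1)$ at equal cost. If you wish to keep your ``touching chain'' viewpoint, you would have to show that an optimum can be \emph{transformed} at equal cost into such a chain (possibly on a different set of positive-radius centers), rather than arguing that overlap contradicts optimality.
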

\begin{proof}
	An arbitrary optimal solution is modified as follows.
For every
two overlapping disks $C_i$ and $C_{i+1}$ with centers $p_i$ and $p_{i+1}$, we
decrease $r_{i+1}$, such that $r_i + r_{i+1} = \dist(p_i,p_{i+1})$, and increase
the radius of $C_{i+2}$ by the same amount.
This can be iterated, until there is at most one overlap at the outermost disk $C_j$ (with $C_{j-1}$).
Then there must be a point $p_{j+1}$ on the boundary of $C_j$: otherwise we could shrink $C_j$ contradicting
optimality. Decreasing $C_j$'s radius $r_{j}$ by the overlap $l$ and
adding a new disk with radius $l$ around $p_{j+1}$ creates an optimal
solution without overlap. 
\end{proof}

\begin{theorem}
\label{th:54}
	Let $P$ be a set of points on a straight line $g$. Then, a best $2$-disk 
        solution for CRA is a $\frac{5}{4}$-approximation.
\end{theorem}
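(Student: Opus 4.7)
The plan is to apply Lemma~\ref{overl_lem} to assume the optimal solution consists of non-overlapping tangent circles $C_1,\ldots,C_m$ with centers $c_1<\cdots<c_m$ in $P$ and radii $\rho_1,\ldots,\rho_m$ summing to $R^*$. Write $D=x_R-x_L$ (so $D\leq 2R^*$) and $M=(x_L+x_R)/2$. The key observation I will use throughout is that for any candidate pair $p\leq q$ drawn from $P$, the cheapest 2-circle solution with centers $p,q$ has cost exactly $\max(q-p,\,D-(q-p))$: the circles must cover the extremes (forcing $r_p\geq p-x_L$ and $r_q\geq x_R-q$) and must intersect ($r_p+r_q\geq q-p$). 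So it suffices to find a pair with $q-p\in [D-\frac{5}{4}R^*,\,\frac{5}{4}R^*]$, or, as a special case, a point $p^*\in P$ within $\frac{5}{4}R^*-\frac{D}{2}$ of $M$ (giving a 1-circle of cost $\frac{D}{2}+|p^*-M|\leq \frac{5}{4}R^*$).

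First I would dispatch the easy subcases: if $D\leq\frac{5}{4}R^*$ then the pair $(x_L,x_R)$ already works; and if $P$ has a point within $\frac{5}{4}R^*-\frac{D}{2}$ of $M$, a 1-circle there works. Otherwise $P$ has a gap $(x_i,x_{i+1})$ containing $M$ of width $w>\frac{5}{2}R^*-D$. Because the non-overlapping optimal has to bridge this gap---either by a single circle centered at a point of $P$ outside the gap (necessarily of radius $\geq w$) or by two tangent optimal circles on either side of the gap whose radii also sum to $\geq w$---we get the useful bound $w\leq R^*$. Evaluating the pair $(x_i,x_{i+1})$ then yields cost $\max(w,D-w)\leq \frac{5}{4}R^*$ whenever $w\geq \frac{D}{2}$ (so cost $\leq R^*$) or $D\leq \frac{15}{8}R^*$ (since then $w>\frac{5}{2}R^*-D\geq D-\frac{5}{4}R^*$, forcing $D-w<\frac{5}{4}R^*$).

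The remaining regime $D>\frac{15}{8}R^*$ with $w<\frac{D}{2}$ is where the real work lies. Here the overhang $2R^*-D$ is strictly less than $\frac{D}{15}$, which forces the optimal's non-overlapping circles to pack very tightly inside (or barely beyond) $[x_L,x_R]$. My plan in this subcase is to look among pairs of optimal centers $(c_j,c_k)$, whose distance $c_k-c_j=2\sum_{i=j}^{k}\rho_i-\rho_j-\rho_k$ sweeps from $0$ up to $2R^*-\rho_1-\rho_m$ in increments $\rho_i+\rho_{i+1}\leq R^*$ (by Lemma~\ref{l2}). As long as no single increment overshoots the target interval, the sweep must enter $[D-\frac{5}{4}R^*,\frac{5}{4}R^*]$; the main obstacle will be handling the occasional large increment, which I would deal with by either considering a skip pair $(c_{j-1},c_{j+1})$ or by selecting a zero-radius leaf of $P$ lying inside the offending optimal circle to serve as one of the 2-circle centers, using the tight-packing constraint to ensure that such a point is close enough to the ideal split position.
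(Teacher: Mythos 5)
Your reduction is fine as far as it goes: after Lemma~\ref{overl_lem} the optimal circles form a tangent chain whose trace is an interval of length $2R^*$ containing $P$, so $D\le 2R^*$, the cost of the best pair of centers $p\le q$ is indeed at most $\max(q-p,\,D-(q-p))$, and your easy cases ($D\le\frac54R^*$, a point near $M$, the bound $w\le R^*$ on the middle gap, and the disposal of $w\ge\frac D2$ or $D\le\frac{15}{8}R^*$) all check out. But the regime you call ``where the real work lies'' is exactly where the theorem's content lies (it corresponds to Cases 2a--2c of the paper, where the two empty regions $A_i,A_{i+1}$ are ruled out by a contradiction with the extremality of the leftmost covered point), and there you give only a plan, not a proof. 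Two concrete problems. First, the sweep anchored at $c_1$ can jump the window $[D-\frac54R^*,\frac54R^*]$ whenever some $\rho_k+\rho_{k+1}>\frac52R^*-D$, and neither proposed repair is established: a skip pair $(c_{j-1},c_{j+1})$ has distance $\rho_{j-1}+2\rho_j+\rho_{j+1}$, i.e.\ it overshoots even further, and simple instances show the pair that works need not involve $c_1$ at all (e.g.\ radii $0.05,0.45,0.5$ with centers $0.05,0.55,1.5$ and $P=\{0,0.05,0.55,1.5,2\}$: the anchored sweep gives $0,0.5,1.45$ and misses $[0.75,1.25]$, while the good pair is $(0.55,1.5)$). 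So at minimum you must argue that among \emph{all} pairs drawn from the centers and the two extremes some distance always lands in the window -- a statement you neither formulate nor prove.

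Second, the fallback of ``selecting a zero-radius leaf of $P$ lying inside the offending optimal circle'' presupposes that $P$ contains a non-center point near the ideal split position. Nothing guarantees this: non-center points of $P$ only need to be covered somewhere, and in many instances (e.g.\ the tight example $P=\{0,1,3,6,8\}$) every point of $P$ is either an optimal center or an extreme, so no such ``leaf'' exists at all. Establishing that a suitable point of $P$ must exist where you need it is precisely the non-trivial existence step; it is the analogue of the paper's Case 2c, where the argument that $A_i\cup A_{i+1}$ cannot both be empty of points of $P$ requires a genuine geometric contradiction, not a packing heuristic. Until you either prove the ``centers-and-extremes cannot all avoid the window'' claim or supply a region/contradiction argument of that type, the essential case of the theorem remains open in your write-up.
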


\begin{proof}
	\begin{figure}[h!]
		\centering
		\includegraphics[width=0.8\linewidth]{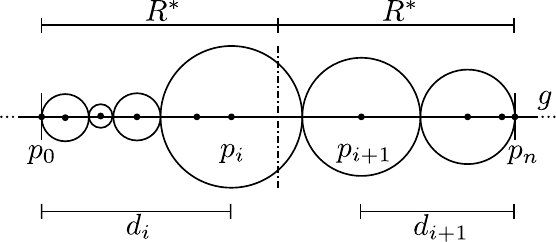}
		\caption{A non-overlapping optimal solution.}
		\label{fig:line_notation}
	\end{figure}
According to Lemma \ref{overl_lem}, we are, without loss of generality, given an optimal
solution $R^*$ with non-overlapping disks. Let $p_0$ and $p_n$ be the outermost
intersection points of the optimal solution disks and $g$. Without loss of generality,
we may further assume $p_0,p_n \in P$ and $R^* := \sum{r_i^*} =
\frac{\dist(p_0,p_n)}{2}$ (otherwise, we can add the outermost intersection
point of the outermost disk and $g$ to $P$, which may only improve
the approximation ratio).
Let $p_i$ denote the rightmost point in $P$ left to the middle of $\overline{p_0 p_n}$ and let $p_{i+1}$ its neighbor on the other half.
Further, let $d_i := \dist(p_0,p_i)$, $d_{i+1} := \dist(p_{i+1},p_n)$ (See Fig.~\ref{fig:line_notation}). Assume, $d_i \geq
d_{i+1}$. We now give $\frac{5}{4}$-approximate solutions using one or two
disks that cover $\overline{p_0 p_n}$.
	
	{\bf Case 1.} If $\frac{3}{4} R^* \leq d_i$ then $\frac{5}{4} R^* \geq 2R^* - d_i = \dist(p_i,p_n)$.
Thus, the solution consisting of exactly one disk with radius $2R^* - d_i$ centered at $p_i$ is sufficient.
	
	{\bf Case 2.} If $\frac{3}{4} R^* > d_i \geq d_{i+1}$ we need two disks 
to cover $\overline{p_0 p_n}$ with $\frac{5}{4}R^*$.
	
	\begin{figure}[h!]
		\centering
		\includegraphics[width=0.8\linewidth]{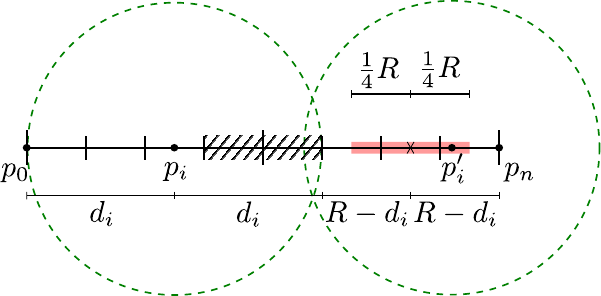}
		\caption{A $\frac{5}{4}$-approximate $2$-disk solution with $d_i < \frac{3}{4}R^*$. The cross marks the position of the optimal counterpart $p_i^*$ to $p_i$ and the grey area sketches $A_i$.}    
		\label{fig:line_solution}
	\end{figure}
	
	{\bf Case 2a.}The point $p_i$ could be a center point of an optimal two-disk solution if there was a point $p_i^*$ with $\dist(C_i,p_i^*) = \dist(p_i^*,p_n) = R^* - d_i$.
So in case there is a $p_i' \in P$ that lies in a $\frac{1}{4}R^*$-neighborhood of such an optimal $p_i^*$ we get $\dist(C_i,p_i'), \dist(p_i',p_n) \leq R^*-d_i+\frac{1}{4}R^*$ (see Fig.~\ref{fig:line_solution}). Thus, $r(p_i) := d_i, r(p_i'):= R^*-d_i+\frac{1}{4}R^*$ provides a $\frac{5}{4}$-approximate solution.
	
	{\bf Case 2b.} Analogously to Case 2a, there is a point $p_{i+1}' \in P$ within
a $\frac{1}{4}R^*$-range of an optimal counterpart to $p_{i+1}$.
Then we can take $r(p_{i+1}):=d_{i+1}$, $r(p_{i+1}') := R^*-d_{i+1}+\frac{1}{4}R^*$ as a
$\frac{5}{4}$-approximate solution.

	{\bf Case 2c.} Assume that there is neither such a $p_i'$ nor such a
$p_{i+1}'$. Since $d_i, d_{i+1}$ are in $(\frac{1}{4}R^*,\frac{3}{4}R^*)$, we have 
$\frac{1}{4}R^* < R^* - d_j < \frac{3}{4}R^*$ for $j = i,i+1$, which implies that there
are two disjoint areas $A_i$, $A_{i+1}$, each with diameter equal to
$\frac{1}{2}R^*$ and excluding all points of $P$. Since $p_i$, the rightmost point
on the left half of $\overline{p_0 p_n}$, has a greater distance to $A_i$ than to
$p_0$, any disk around a point on the left could only cover parts of both
$A_i$ and $A_{i+1}$ if it has a greater radius than its distance to $p_0$. This
contradicts the assumption that $p_0$ is a leftmost point of a disk in an optimal
solution. The
same applies to the right-hand side. Thus, $A_i \cup A_{i+1}$ must contain at
least one point of $P$, and therefore one of the previous cases leads to a
$\frac{5}{4}$-approximation. 
\end{proof}

\begin{figure}[h!]
	\centering
	\includegraphics[width=0.8\linewidth]{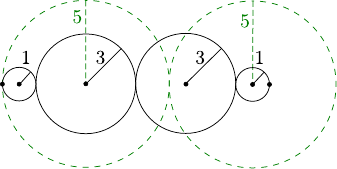}
	\caption{A lower bound of $\frac{5}{4}$ for 2-disk solutions.\label{2_disk_strict}}
\end{figure}

Fig.~\ref{2_disk_strict} shows that the bound is tight. We believe that this
is also the worst case when points are {\em not} on a line. Indeed,
the solutions constructed in the proof of Theorem~\ref{th:54}
cover a longest path ${\cal P}$ in an optimal
solution for a general $P$.  If this longest path consists of at most three
edges, $p_i (=:p_{i+1}')$ and $p_{i+1} (=:p_i')$ can be chosen as disk 
centers, covering all of $P$. However, if ${\cal P}$ consists of at least
four edges, a solution for the diameter may produce two internal non-adjacent
center points that do not necessarily cover all of $P$.

\section{A Polynomial-Time Approximation Scheme for CRA}\label{sec:ptas}

\old{ xxx
Section 5 was poorly written. It is arguably the most technically demanding part of the paper, and there
is not a single diagram. Additionally they do things like use a concept (like m-gap) in a proof, and then try
to define it on the fly after it was used. I had to read it multiple times before I understood what they
were saying. There are many concepts that are poorly defined, or defined in fragments here and there.
They also make a claim that they attempt to prove, again ``on the fly'' within another proof, but they do
not make any kind of convincing argument. At this point for all I know it is not true, and the claim is false.

Sections 5.1.3 and 5.2 have concepts that are defined ambiguously or incompletely. In addition these
sections rely on the clarity and correctness of the preceding section 5.1.2, which is presently lacking.

Joe: I don't see where ``m-gap'' is used at all??  I will have to decipher what the referee meant in these remarks...
I think by ``m-gap'' they mean ``m-span''?
}

We have seen (Theorem~\ref{thm:hardness}, Section~\ref{sec:np-proof}) that the problem CRA is NP-hard when there are upper bounds given on the possible radii, $r_i$, of the input points $p_i\in P$, even if these upper bounds are all the same value $\beta$.
We do not yet know the complexity of exactly solving the problem CRA
in the case in which the radii $r_i$ are not bounded above.  In this
section, we give a polynomial-time approximation scheme (PTAS) for CRA
for this case ($\beta=\infty$): for any fixed $\epsilon>0$, we
compute, in polynomial time, a set of radii, $r_i$, so that the disks
of radii $r_i$ centered at the respective input points $p_i$ form a
connected set of disks, and the sum of radii $\sum_i r_i$ is at most
$(1+\epsilon)$ times optimal.


To design a PTAS for this problem, we first prove a structure theorem that allows us
to apply the $m$-guillotine method (\cite{m-gsaps-99}) and transform any solution into an {\em ``$m$-guillotine solution''} (defined precisely below) with a small increase in the sum of radii. This theorem
permits the subsequent use of dynamic programming to obtain a PTAS. 

\subsection{Structural Results}

For our first structural result, we need some definitions.  Consider a
set ${\cal D}$ of disks, centered at the points
$P=\{p_1,p_2,\ldots,p_n\}$.  Let $\delta=diam(P)$ denote the diameter
of the point set; i.e., $\delta$ is the maximum of the (Euclidean)
distances between pairs of points of $P$.  If the radii of the disks
${\cal D}$ are all from the discrete set ${\cal R}=\{\delta/mn,
2\delta/mn, \ldots, \delta\}$, then we will say that ${\cal D}$ is a
set of {\em ${\cal R}_{m,P}$-disks}, or {\em ${\cal R}$-disks}, for
short, with the understanding that $m=O(1/\epsilon)$ and $P$ will be
fixed throughout our discussion.  Now consider a set ${\cal D}$ of
${\cal R}$-disks, centered on the points $P$.  We let ${\cal I}_x$
(resp., ${\cal I}_y$) denote the set of $x$-coordinates (resp.,
$y$-coordinates) of the set of all coordinate-extreme points
(leftmost, rightmost, topmost, and bottommost) of the disks ${\cal
  D}$.  Specifically, ${\cal I}_x=\{x(p_i) \pm j(\delta/mn): 1\leq
i\leq n, 0\leq j\leq mn\}$ and ${\cal I}_y=\{y(p_i) \pm j(\delta/mn):
1\leq i\leq n, 0\leq j\leq mn\}$, where $x(\cdot)$ and $y(\cdot)$
denote $x$- and $y$-coordinates.

We begin with a simple lemma that shows that we can round up the radii
of a feasible solution, to make it a set of ${\cal R}$-disks, at a
small cost to the objective function:

\begin{lemma}
\label{lem:round}
Let $\sum_i r_i$ be the sum of radii in a feasible (connected union)
solution, ${\cal D}$.  Then, for any fixed $\epsilon>0$, there exists
a set, ${\cal D}_m$, of $n$ disks of radii $r'_i$ centered on points
$p_i$, such that (a). $r'_i\in {\cal R}=\{\delta/mn,
2\delta/mn,\ldots, \delta\}$, where $\delta=diam(P)$ is the diameter
of the input point set $P$ and $m=\lceil 2/\epsilon\rceil$; and
(b). $\sum_i r'_i \leq (1+\epsilon)\sum_i r_i$.
\end{lemma}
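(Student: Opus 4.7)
The plan is to obtain $\mathcal{D}_m$ by simply rounding each optimal radius $r_i^*$ up to the nearest value in the discrete grid $\mathcal{R}=\{D/mn, 2D/mn, \ldots, D\}$. Since the rounding step size is $D/(mn)$, the total increase across all $n$ circles is bounded by $n \cdot D/(mn) = D/m$. To turn this additive error into a multiplicative $(1+\epsilon)$ factor, I will establish a lower bound $R^* \geq D/2$ and use $m = \lceil 2/\epsilon \rceil$ to get $D/m \leq \epsilon D / 2 \leq \epsilon R^*$.

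First I would verify the lower bound $R^* \geq D/2$. Let $p,q \in P$ be two input points realizing the diameter, i.e., $\dist(p,q)=D$. Since the optimal solution is connected, there is a sequence of centers $p=q_0, q_1, \ldots, q_k=q$ whose consecutive circles intersect, so $\dist(q_{j},q_{j+1}) \leq r_{q_j}^* + r_{q_{j+1}}^*$. The triangle inequality then gives
\[
D = \dist(p,q) \leq \sum_{j=0}^{k-1} \dist(q_j, q_{j+1}) \leq \sum_{j=0}^{k-1}(r_{q_j}^* + r_{q_{j+1}}^*) \leq 2 R^*,
\]
so $R^* \geq D/2$, as claimed.

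Next I need to ensure the rounded radii actually fit in $\mathcal{R}$, which requires $r_i^* \leq D$. This follows because any radius exceeding $D$ can be truncated to $D$ without destroying connectivity: the $n$ input points all lie within a disk of diameter $D$, so a circle of radius $D$ around $p_i$ already contains every other point, hence still meets every circle that it met before truncation; this contradicts optimality. So $r_i^* \in [0,D]$ and each can be rounded up to an element $r_i \in \mathcal{R}$ with $r_i \leq r_i^* + D/(mn)$. Rounding up only helps connectivity, since every pair of originally intersecting circles still intersects, so $\mathcal{D}_m$ is feasible.

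Finally I combine the bounds: summing the rounding errors gives
\[
\sum_i r_i \;\leq\; \sum_i r_i^* + n \cdot \frac{D}{mn} \;=\; R^* + \frac{D}{m} \;\leq\; R^* + \frac{\epsilon D}{2} \;\leq\; R^* + \epsilon R^* \;=\; (1+\epsilon) R^*,
\]
establishing part (b), while part (a) holds by construction. The only subtle step is the lower bound $R^* \geq D/2$; everything else is bookkeeping, and I expect no serious obstacle.
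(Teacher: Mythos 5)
Your proposal is correct and follows essentially the same route as the paper: round each $r_i^*$ up to the grid ${\cal R}$, bound the total increase by $n\cdot D/(mn)=D/m\leq \epsilon D/2$, note that increasing radii preserves connectivity, and absorb the additive error using $R^*\geq D/2$. The only difference is that you explicitly justify the facts the paper takes for granted (that $R^*\geq D/2$ via a connecting chain, and that $r_i^*\leq D$ so the rounded radii fit in ${\cal R}$), which is fine.
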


\begin{proof}
Each of the $n$ radii $r_i$ can be increased by at most $\delta/mn\leq
\epsilon \delta/2n$ at a total cost (increase in the sum of all radii)
of at most $\epsilon \delta/2$.  Since increasing the radii of the
disks keeps the set of disks connected, and since $\sum_i r_i\geq
\delta/2$, we obtain the result.
\end{proof}

Next, we state and prove a simple observation about the structure of
an optimal solution:

\begin{lemma}
\label{lem:6}
Let ${\cal D}$ be a set of circular disks of radii $r_i$ centered at
points $p_i\in P$, such that the union of the disks is connected.
Then, there exists a set ${\cal D}'$ of circular disks centered at
points $p_i\in P$, such that the union of the disks ${\cal D}'$ is
connected, the sum of the radii of disks ${\cal D}'$ is at most
$\sum_i r_i$, and no point in the plane lies within more than 6 disks
of ${\cal D}'$.  In particular, in an optimal solution, no point in
the plane lies within more than 6 disks of the set of optimal disks.  Further, if
${\cal D}$ are ${\cal R}$-disks, then there exists a set ${\cal D}'$
of ${\cal R}$-disks of total radii $\sum_i r_i$ having the property
that no point in the plane lies within more than 6 disks of ${\cal
  D}'$.
\end{lemma}

\begin{proof}
If some point $p$ lies within more than 6 disks of ${\cal D}$, then we
know that two of them, say $C_1$ and $C_2$, have centers, $p_1$ and
$p_2$, within a 60-degree cone with apex $p$.  Thus, the distance
$|p_1p_2|$ between the two centers is less than the larger, say $r_1$
(the radius of $C_1$), of the two radii ($r_1$ and $r_2\leq r_1$),
implying that center $p_2$ lies within the disk $C_1$.  Thus, if we
enlarge the disk $C_1$ to have radius $r_2 + |p_1p_2| \leq r_1+r_2$,
the enlarged disk completely covers disk $C_2$, allowing $C_2$ to be
deleted (i.e., allowing the radius of $C_2$ to be shrunk to 0).  This
modification yields a set ${\cal D}'$ of disks that are connected
(since the union of the disks of ${\cal D}'$ contains the union of the
disks of ${\cal D}$), whose total sum of radii is at most that of the
set ${\cal D}$.

If the disks ${\cal D}$ are ${\cal R}$-disks, then the same argument
applies, but we now replace the radius $r_1$ with the enlarged radius
$r_1+r_2$ ($\geq r_2+|p_1p_2|$), while shrinking $r_2$ to zero.  Since
the disks ${\cal D}$ are ${\cal R}$-disks, we know that $r_1, r_2 \in
{\cal R}=\{\delta/mn, 2\delta/mn, \ldots, \delta\}$, and therefore
$r_1+r_2\in {\cal R}$.  Thus, disks ${\cal D}'$ are ${\cal R}$-disks.
\end{proof}

Our main structural result requires further definitions.
Let ${\cal D}$ be a set of $n$ disks; we consider the disks to be {\em closed} (include their bounding circles).
Let $\rho$ be an axis-parallel rectangle.
An axis-parallel line $\ell$ that intersects $\rho$ is said to be a {\em cut} of $\rho$.
A cut $\ell$ is said to be {\em $m$-perfect} with respect to ${\cal D}$ and the rectangle $\rho$
if $\ell$ intersects the interiors of at most $c_1 m + c_2$ 
disks of ${\cal D}$ that 
lie interior to~$\rho$, for fixed constants (independent of $m, \epsilon$) $c_1$ and $c_2$.

We say that a set ${\cal D}$ of ${\cal R}$-disks is {\em
  $m$-guillotine with respect to (axis-aligned) rectangle $\rho$} if
either (1) no disk of ${\cal D}$ lies interior to $\rho$, or (2) there
exists an (axis-parallel) cut $\ell$, defined by coordinates ${\cal
  I}_x$ and ${\cal I}_y$, that is $m$-perfect with respect to ${\cal
  D}$ and $\rho$, such that ${\cal D}$ is $m$-guillotine with respect
to both of the rectangles into which $\rho$ is partitioned by the cut
$\ell$.  We say that a set ${\cal D}$ of ${\cal R}$-disks is {\em
  $m$-guillotine} if ${\cal D}$ is $m$-guillotine with respect to the
bounding box, $BB({\cal D})$, of ${\cal D}$.  See
Fig.~\ref{fig:example} for an example of a set of disks that is
1-guillotine.

\begin {figure}[htbp!]
\centering
\begin{picture}(0,0)%
\includegraphics{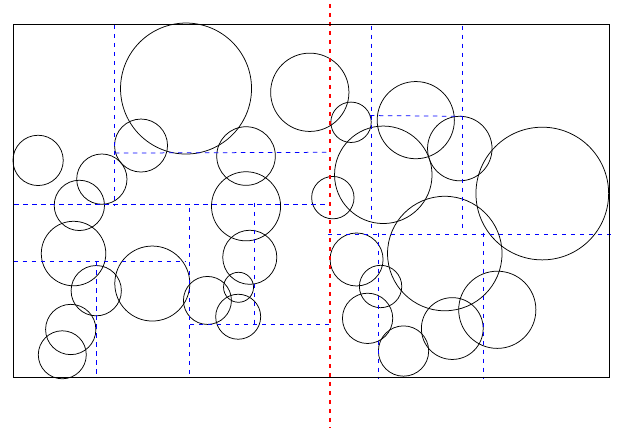}%
\end{picture}%
\setlength{\unitlength}{1579sp}%
\begingroup\makeatletter\ifx\SetFigFont\undefined%
\gdef\SetFigFont#1#2#3#4#5{%
  \reset@font\fontsize{#1}{#2pt}%
  \fontfamily{#3}\fontseries{#4}\fontshape{#5}%
  \selectfont}%
\fi\endgroup%
\begin{picture}(12514,8553)(406,-7924)
\put(436,-4681){\makebox(0,0)[lb]{\smash{{\SetFigFont{8}{9.6}{\familydefault}{\mddefault}{\updefault}{\color[rgb]{0,0,0}4}%
}}}}
\put(7786,-6646){\makebox(0,0)[lb]{\smash{{\SetFigFont{8}{9.6}{\familydefault}{\mddefault}{\updefault}{\color[rgb]{0,0,0}4}%
}}}}
\put(5386,-2311){\makebox(0,0)[lb]{\smash{{\SetFigFont{8}{9.6}{\familydefault}{\mddefault}{\updefault}{\color[rgb]{0,0,0}4}%
}}}}
\put(4156,-7291){\makebox(0,0)[lb]{\smash{{\SetFigFont{8}{9.6}{\familydefault}{\mddefault}{\updefault}{\color[rgb]{0,0,0}3}%
}}}}
\put(7636,254){\makebox(0,0)[lb]{\smash{{\SetFigFont{8}{9.6}{\familydefault}{\mddefault}{\updefault}{\color[rgb]{0,0,0}4}%
}}}}
\put(9676,269){\makebox(0,0)[lb]{\smash{{\SetFigFont{8}{9.6}{\familydefault}{\mddefault}{\updefault}{\color[rgb]{0,0,0}3}%
}}}}
\put(8641,-1636){\makebox(0,0)[lb]{\smash{{\SetFigFont{8}{9.6}{\familydefault}{\mddefault}{\updefault}{\color[rgb]{0,0,0}5}%
}}}}
\put(12736,-4096){\makebox(0,0)[lb]{\smash{{\SetFigFont{8}{9.6}{\familydefault}{\mddefault}{\updefault}{\color[rgb]{0,0,0}2}%
}}}}
\put(2206,-7231){\makebox(0,0)[lb]{\smash{{\SetFigFont{8}{9.6}{\familydefault}{\mddefault}{\updefault}{\color[rgb]{0,0,0}5}%
}}}}
\put(6736,389){\makebox(0,0)[lb]{\smash{{\SetFigFont{8}{9.6}{\familydefault}{\mddefault}{\updefault}{\color[rgb]{0,0,0}1}%
}}}}
\put(5956,-5956){\makebox(0,0)[lb]{\smash{{\SetFigFont{8}{9.6}{\familydefault}{\mddefault}{\updefault}{\color[rgb]{0,0,0}4}%
}}}}
\put(5266,-4591){\makebox(0,0)[lb]{\smash{{\SetFigFont{8}{9.6}{\familydefault}{\mddefault}{\updefault}{\color[rgb]{0,0,0}5}%
}}}}
\put(2581,269){\makebox(0,0)[lb]{\smash{{\SetFigFont{8}{9.6}{\familydefault}{\mddefault}{\updefault}{\color[rgb]{0,0,0}3}%
}}}}
\put(421,-3631){\makebox(0,0)[lb]{\smash{{\SetFigFont{8}{9.6}{\familydefault}{\mddefault}{\updefault}{\color[rgb]{0,0,0}2}%
}}}}
\put(10051,-7186){\makebox(0,0)[lb]{\smash{{\SetFigFont{8}{9.6}{\familydefault}{\mddefault}{\updefault}{\color[rgb]{0,0,0}$3$}%
}}}}
\put(11551,-6661){\makebox(0,0)[lb]{\smash{{\SetFigFont{12}{14.4}{\familydefault}{\mddefault}{\updefault}$\rho$}}}}
\put(7066,-7726){\makebox(0,0)[lb]{\smash{{\SetFigFont{12}{14.4}{\familydefault}{\mddefault}{\updefault}$\ell$}}}}
\end{picture}%

\caption{
    An example of an $m$-guillotine set of disks. The vertical red cut $\ell$ is the first cut (labelled ``1'') in the recursive decomposition of the rectangle $\rho$.  The blue cuts occur in the recursive decomposition; each is labeled with an integer (``2'' through ``5'') indicating the level in the recursive decomposition. Each cut intersects the interior of at most 2 disks that are interior to the rectangle corresponding to the cut. All cuts are at coordinates ${\cal I}_x$ or ${\cal I}_y$. In the final decomposition of $\rho$, no disk lies interior to a rectangular face (some do lie within a face, in contact with the boundary).  Thus, the set of disks is 1-guillotine, with constants $c_1=2$, $c_2=0$. \label{fig:example}}
\end{figure}

We are now ready to state and prove our structure theorem, which shows that we can
transform an arbitrary set ${\cal D}$ of circular disks centered on
points $P$, having a connected union and a sum of radii $R=\sum_i r_i$, into an
$m$-guillotine set of ${\cal R}$-disks, ${\cal D}_m$, having sum of
radii at most $(1+\epsilon)R$ and having a connected union.  More
specifically, we show:

\begin{theorem}
\label{thm:structure}
Let ${\cal D}$ be a set of $n$ circular disks of radii $r_i$ centered at points
$p_i\in P$, such that the union of the disks is connected.  Then, for
any fixed $\epsilon>0$, there exists an $m$-guillotine set ${\cal
D}_m$ of at most $n$ ${\cal R}$-disks such that the union of the circular disks ${\cal
D}_m$ is connected and the sum of the radii of disks of ${\cal D}_m$
is at most $(1+(C/m))\sum_i r_i$. Here, $m=\lceil 2/\epsilon \rceil$
and $C$ is a constant.
\end{theorem}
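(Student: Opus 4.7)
The plan is to adapt the $m$-guillotine framework to the setting of circles with fixed centers and a connectivity constraint. As a preprocessing step, I would round up each $r_i$ to the smallest value in ${\cal R}$ that is at least $r_i$, applying the idea of Lemma~\ref{lem:round}. This adds an additive $D/m$, which is absorbed into the $(1+C/m)$ factor since $\sum_i r_i \geq D/2$ whenever the union is connected; and since enlarging radii preserves connectivity, we may restrict attention to ${\cal R}$-circles throughout the recursion.

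The main construction is a recursive $m$-guillotine decomposition of $BB({\cal D})$. At each step we have a rectangle $\rho$ and the subset ${\cal D}_\rho$ of circles meeting $\rho$. The core geometric lemma to establish is that there always exists an axis-parallel cut line $\ell$ through $\rho$, with coordinate in ${\cal I}_x \cup {\cal I}_y$, that (i)~is $m$-perfect, crossing at most $2m$ circles of ${\cal D}_\rho$, and (ii)~has a ``bridge portion'' of length at most $(1/m)$ times the total circle span inside $\rho$. This would follow by averaging over the candidate coordinates: as a horizontal line sweeps $\rho$, each circle of radius $r_i$ is crossed for a $y$-interval of length $2r_i$, so the total crossing count $\int c_x(y)\,dy$ is at most $2\sum_{C_i \in {\cal D}_\rho} r_i$, and a similar bound governs the bridge-length density; pigeonhole on the $(2m)$-th order statistic then produces a cut satisfying both conditions.

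Along each chosen cut $\ell$, I would build the bridge by extending the radii of $O(m)$ existing ${\cal R}$-circles whose centers lie closest to $\ell$, enlarging each to the smallest ${\cal R}$-value that jointly covers all chords cut from circles of ${\cal D}_\rho$ by $\ell$. These bridges preserve connectivity: every circle crossed by $\ell$ has its two halves linked through the bridge, so the union remains connected across the cut. The recursion continues on the two resulting subrectangles, terminating when each face contains no circle of ${\cal D}_m$ strictly in its interior.

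The hard part is the charging argument bounding the total bridge cost, summed over all recursive cuts, by $(C/m) \sum_i r_i$. Following the classical $m$-guillotine amortization, each circle $C_i$ contributes its span $O(r_i)$ to the averaging integral in each recursion rectangle containing it, but by the $m$-perfect choice its contribution to actual bridge cost is dampened by a factor $1/m$; a careful telescoping over the recursion tree then yields the desired $O((1/m)\sum_i r_i)$ bound. The main obstacle is making this amortization rigorous under two constraints specific to our setting: centers are fixed at $P$, so bridges must be realized by extensions of pre-existing radii rather than by fresh Steiner circles, which constrains how a bridge can reach along $\ell$; and the quantity being amortized is a sum of circle radii rather than of edge lengths of a network, so one must verify that each $r_i$ is charged at most $O(1/m)$ times in total across the full recursion before the additive rounding overhead can be absorbed into the promised multiplicative bound.
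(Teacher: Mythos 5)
Your proposal follows the paper's skeleton (rounding via Lemma~\ref{lem:round}, a recursive guillotine decomposition with cut coordinates in ${\cal I}_x\cup{\cal I}_y$, bridges realized by enlarging existing circles so connectivity is preserved), but it has a genuine gap at exactly the step the theorem stands on: the amortization of the bridge cost. Your per-rectangle averaging shows at best that some cut of $\rho$ has bridge length at most $(1/m)$ times the sum of radii of circles meeting $\rho$. Summing that over the recursion does not telescope to $(C/m)\sum_i r_i$: every circle is counted in every rectangle on the recursion path containing it, so the naive sum is of order $(\mathrm{depth}/m)\sum_i r_i$, and the recursion depth is not a constant. You yourself flag this as ``the main obstacle,'' but no mechanism is given to ensure each $r_i$ is charged only $O(1/m)$ in total. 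The paper supplies precisely that mechanism: it defines the chargeable ($m$-dark) length of a cut with respect to the vertical/horizontal arcs of the circle boundaries, uses the Fubini-type identity $\int_{x\in\rho} f(x)\,dx=\int_{y\in\rho} h(y)\,dy$ to prove that a \emph{favorable} cut exists whose $m$-span is no longer than its chargeable length, and charges the bridge to the first $m$ arcs on each side, $1/m$ to each; since a side of an arc is charged at most once over the whole recursion, each arc of length $\lambda$ pays at most $\lambda/2m$ per side, which is what yields the global $(C/m)\sum_i r_i$ bound. Without this (or an equivalent ``each arc side charged at most once'') device, the claimed telescoping is unsupported.

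Two further problems. First, your ``core geometric lemma'' asks for a cut that is simultaneously $m$-perfect (crossing at most $2m$ circles) and has a short bridge; under the paper's definitions these are alternatives, not a conjunction: if an $m$-perfect cut exists one recurses at zero cost, and only when none exists is a bridge needed — and in that regime the sweep average $(2\sum_{C_i\in{\cal D}_\rho} r_i)/\mathrm{height}(\rho)$ can exceed $2m$ for \emph{every} line, so the pigeonhole on the $(2m)$-th order statistic need not produce an $m$-perfect cut at all. Second, your bridge realization enlarges $O(m)$ circles so as to ``jointly cover all chords cut from circles of ${\cal D}_\rho$ by $\ell$''; this is both unnecessary and uncontrolled in cost, since those chords may be spread along the entire cut. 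The paper covers only the $m$-span segment $a_mb_m$, by enlarging the single circle defining the endpoint $a_m$ by at most $|a_mb_m|$; that keeps the union connected, makes the cut effectively $m$-perfect, and is what ties the radius increase exactly to the charged $m$-span length.
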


\begin{proof}
First, by Lemma~\ref{lem:6}, we can assume that no point in the plane
lies in more than 6 of the input disks.

Consider the set of bounding boxes (squares) of the disks ${\cal D}$.
For each disk of radius $r_i$, we consider the 4 edges of its bounding
square, together with the vertical and horizontal diameter segments of
the disk.  For each disk, the resulting arrangement of
horizontal/vertical line segments forms an arrangement (a square with
a ``$+$'' inside it, partitioning the square into four subsquares) of
12 line segments, each of length $r_i$.  We let $E$ denote this set of
$12n$ line axis-parallel line segments.  Let $BB(E)$ denote the
axis-aligned bounding box of~$E$, which is the same as $BB({\cal D})$,
the bounding box of the disks.  The edges $E$ have their endpoint
coordinates among ${\cal I}_x$ and ${\cal I}_y$, and they form a
network of horizontal/vertical segments within $BB(E)$, of total
length $\lambda(E)=12\sum_i r_i$.

We utilize the concept of an ``$m$-span'' from \cite{m-gsaps-99}.  
\old{Let
$E$ be a given finite set of {\em edges}, each of which is a line
segment or a curved arc, in the plane.}
\old{We assume that each curved arc is algebraic, as well as $x$-monotone
and $y$-monotone, meaning that the intersection of any arc with any
horizontal or vertical line is either empty or a single line segment.
(The definitions can be extended to more general classes of curves,
but this is not needed here, since our curved arcs are quarter
circles.)}
Let $\rho$ be an axis-aligned rectangle. The intersection, $\ell\cap
(E\cap \rho)$, of a horizontal/vertical cut line $\ell$ with $E\cap
\rho$ consists of a discrete (possibly empty) set of line segments and
singleton intersection points (which we can consider to be zero-length line segments).
Let the endpoints of these segments, be denoted by $q_1,\ldots,q_{\xi}$, in order along~$\ell$.
We define the {\em $m$-span}, $\sigma_m(\ell)$, of $\ell$ (with
respect to $\rho$) to be the empty set if $\xi\leq 2(m-1)$, and to be
the (possibly zero-length) line segment $q_{m}q_{\xi-m+1}$ otherwise.
Refer to Fig.~\ref{fig:m-span-E} for an example.

\old{
Consider the set of bounding boxes (squares) of the disks ${\cal D}$;
they have coordinates among ${\cal I}_x$ and ${\cal I}_y$.  Let $E$ be
the set of boundary edges of these squares.  The edges $E$ form a
network of horizontal/vertical segments within $\rho$, of total length
$\lambda(E)$.  }
\old{Each of the $n$ disks in ${\cal D}$ is bounded by a circle, which we
view as a union of four quarter-circular arcs, having endpoints at the
topmost/bottommost and leftmost/rightmost points of each disk.  Let
$E$ be the set of these $4n$ arcs.  The total length of the arcs in
$E$ is $\lambda(E)=\sum_i 2\pi r_i$, where $r_i$ is the radius of the $i$th disk
of ${\cal D}$.  Let $BB(E)$ denote the axis-aligned bounding box of~$E$.}

By the $m$-guillotine charging argument~\cite{m-gsaps-99}, we know
that the edge set $E$ can be augmented by a set of $m$-span
horizontal/vertical segments (bridges), of total length
$O(\lambda(E)/m)=O((1/m)\sum_i r_i)$, so that the resulting edge set
$E'$ 
%
%
is $m$-guillotine with respect to
$BB(E)$ in the usual sense defined in~\cite{m-gsaps-99}: there exists
a horizontal/vertical cut
%
%
for $BB(E)$ such that the $m$-span of $E'$ is contained in $E'$, and,
recursively, the set $E'$ is $m$-guillotine with respect to the
subrectangles of $\rho$ on either side of the cut.  The base case of
the recursion is specified by defining $E'$ to be $m$-guillotine with
respect to rectangle $\rho$ if no vertex (endpoint) of an edge of $E'$
lies interior to $\rho$.
%

\begin {figure}[htbp!]
\centering
\begin{picture}(0,0)%
\includegraphics{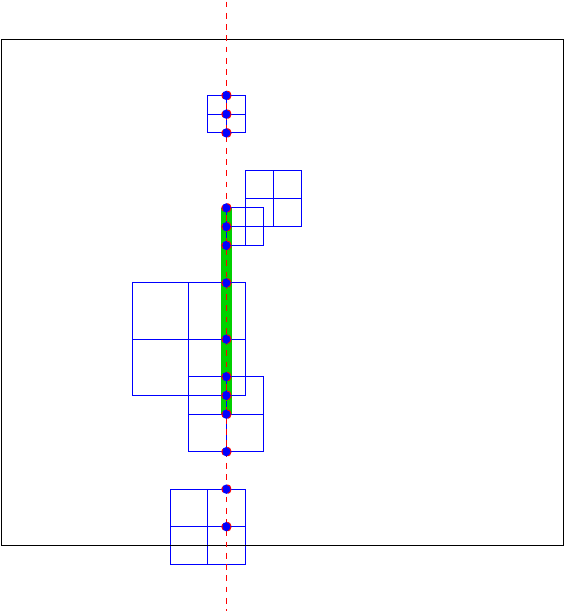}%
\end{picture}%
\setlength{\unitlength}{1973sp}%
\begingroup\makeatletter\ifx\SetFigFont\undefined%
\gdef\SetFigFont#1#2#3#4#5{%
  \reset@font\fontsize{#1}{#2pt}%
  \fontfamily{#3}\fontseries{#4}\fontshape{#5}%
  \selectfont}%
\fi\endgroup%
\begin{picture}(9044,9794)(1179,-9533)
\put(4951,-9286){\makebox(0,0)[lb]{\smash{{\SetFigFont{12}{14.4}{\familydefault}{\mddefault}{\updefault}$\ell$}}}}
\put(4201,-3136){\makebox(0,0)[lb]{\smash{{\SetFigFont{10}{12.0}{\familydefault}{\mddefault}{\updefault}$q_4$}}}}
\put(8851,-8011){\makebox(0,0)[lb]{\smash{{\SetFigFont{12}{14.4}{\familydefault}{\mddefault}{\updefault}$\rho$}}}}
\put(5251,-1336){\makebox(0,0)[lb]{\smash{{\SetFigFont{10}{12.0}{\familydefault}{\mddefault}{\updefault}$q_1$}}}}
\put(5176,-8236){\makebox(0,0)[lb]{\smash{{\SetFigFont{10}{12.0}{\familydefault}{\mddefault}{\updefault}$q_\xi=q_{14}$}}}}
\put(5251,-1936){\makebox(0,0)[lb]{\smash{{\SetFigFont{10}{12.0}{\familydefault}{\mddefault}{\updefault}$q_3$}}}}
\put(5251,-1636){\makebox(0,0)[lb]{\smash{{\SetFigFont{10}{12.0}{\familydefault}{\mddefault}{\updefault}$q_2$}}}}
\end{picture}%

\caption{
    An example of an $m$-span (bold green) for a set of edges $E$, shown in blue.  Here, $m=4$, and the cut $\ell$ has $\xi=14$ endpoints of segments of $E$ along it. 
\label{fig:m-span-E}}
\end{figure}

\old{
We say that segment $ab$ {\em fully crosses} a square $\Sigma$ if
$ab\cap \Sigma = \ell\cap \Sigma$; i.e., $ab$ crosses two opposite
sides of $\Sigma$.  We then distinguish two cases:}
We say that an axis-parallel segment $ab\subset \ell$ {\em fully
  crosses} a disk $D$ if $ab$ intersects the interior of $D$ and
$ab\cap D = \ell\cap D$; i.e., $ab$ crosses two opposite (top/bottom
or left/right) semicircles that comprise the boundary of $D$, so that
as one traverses the segment $ab$, one enters $D$ exactly once and
then exits $D$ exactly once.  Note that if $ab\subset \ell$ fully
crosses $D$, then the projection of the center of $D$ onto $\ell$ lies
on the segment $ab$.
Note too that if segment $ab$ fully crosses a disk $D$, then $ab$
properly crosses the (horizontal/vertical) diameter segment of $D$
that is perpendicular to~$ab$

If $E'=E$ (i.e., no $m$-span segments had to be added to $E$, as $E$
is already $m$-guillotine as an edge set), then the disks ${\cal D}$
are already $m$-guillotine, since the recursive cutting that exhibits
that $E$ is $m$-guillotine as a network also serves to establish that
the set ${\cal D}$ of disks is $m$-guillotine as a set of disks.
Specifically, if the $m$-span along a cut $\ell$ of a rectangle $\rho$
is contained in the edge set $E$, then there are at most $O(m)$ points
within $\rho$ where $\ell$ crosses edges of $E$, and, in particular,
at most $O(m)$ points where $\ell$ properly crosses diameter segments
of disks.  This implies that the segment $\ell\cap \rho$ fully crosses
at most $O(m)$ disks, implying that $\ell\cap \rho$ intersects $O(m)$
disks (Lemma~\ref{lem:6} implies that each endpoint of $\ell\cap \rho$
intersects at most 6 disks), making $\ell$ an $m$-perfect cut (for
appropriate choices of constants $c_1$ and $c_2$).

If $E'\neq E$, then consider an $m$-span segment, $ab$, that was added
to $E$ along some cut $\ell$ in the recursive partitioning of $\rho$.
Let $a'b' = \ell\cap \rho$, with $a$ the endpoint of $ab$ that is closest to $a'$ (i.e.,
the order along $\ell$ is $(a', a, b, b')$).
We then distinguish two cases:
\begin{description}
\item[(a)] If the $m$-span $ab$ does {\em not} fully cross a disk of
  ${\cal D}$, then we know that $a'b'$ intersects only $O(m)$ disks of
  ${\cal D}$, by the following reasoning.  
By the definition of $m$-span for $E$, the segments $a'a$ and $bb'$
each have at most $m$ points along them where the segment crosses
(orthogonally) an edge of $E$ corresponding to a diameter segment of a
disk; thus, each of $a'a$ and $bb'$ fully cross at most $m$ disks of
${\cal D}$.  Since we are assuming that $sb$ does not fully cross any
disk, we know that the total number of disks that are fully crossed by
$a'b'$ is $O(m)$.  Points $a'$ and $b'$ each lie within at most 6
disks, by Lemma~\ref{lem:6}.  Any disk that intersects $a'b'$ must
either contain $a'$ or $b'$ or be fully crossed by $a'b'$ (meaning
that, when going from $a'$ to $b'$ along the segment, we both enter
and leave the disk).  Thus, in total there are $O(m)$ disks of ${\cal
  D}$ that are intersected by $a'b'=\ell\cap \rho$, implying that
$\ell$ is $m$-perfect with respect to ${\cal D}$ and $\rho$.
\old{The points $a$ and $b$ each lie
within at most 6 disks of ${\cal D}$. Since $ab$ does not fully cross any disk, we
know that along the segment $ab$, when going from $a$ towards $b$, we
exit only those disks that $a$ lies within (and $b$ does not), and we
enter only those disks that $b$ lies within (and $a$ does not).  Thus,
the interior of the segment $ab$ has at most 12 crossing points (where it crosses the boundary of a disk).
In total, then, the segment $a'b'$ has at most $2m+12$ crossing points along it, and the segment $a'b'$ intersects at most
$m+12$ disks.
%
%
Thus, the cut $\ell$ is
$m$-perfect with respect to the disks ${\cal D}$ (for constants $c_1\geq 1$, $c_2\geq 12$ in the definition of $m$-perfect).
}

\item[(b)] If the $m$-span $ab$ {\em does} fully cross at least one
  disk, then such a disk has the property that the projection of its
  center point onto $\ell$ lies on the segment $ab$.  Assume, without
  loss of generality, that $\ell$ (and thus $ab$) is vertical.  Then,
  among disks that $ab$ fully crosses, if there are any with center
  point to the left of $\ell$, let $D_L$ (centered at $p_L$, with
  radius $r_L$) be a disk with leftmost center point $p_L$.
  Similarly, if there are fully crossed disks with center point to the
  right of $\ell$, let $D_R$ (centered at $p_R$, with radius $r_R$) be
  a disk with rightmost center point $p_R$.

\begin{claim}
\label{claim:1}
If there are any disks of ${\cal D}$ that are fully crossed by $ab$
and have center point to the left of $ab$, then the disk of radius
$r_L+2|ab|$ centered at $p_L$ covers any disk $D\in {\cal D}$, with
center $p$ and radius $r$, that is fully crossed by $ab$ and has $p$
to the left of $ab$.  Similarly, if there are any disks of ${\cal D}$
that are fully crossed by $ab$ and have center point to the right of
$ab$, then the disk of radius $r_R+2|ab|$ centered at $p_R$ covers any
disk of ${\cal D}$ that is fully crossed by $ab$ and has its center to
the right of $ab$.
\end{claim}

\begin{proof}
Assume that there is at least one disk of ${\cal D}$ that is fully
crossed by $ab$ and has its center to the left of $ab$.  Then, we let
disk $D_L$, with center $p_L$ and radius $r_L$, be one such disk whose
center point is leftmost.  We know that the center point $p_L$
projects onto $ab$; let $h_L$ be the distance from $p_L$ to $ab$
(i.e., the (horizontal) distance from $p_L$ to its projection on the
line $\ell$).  Let $D\in {\cal D}$ be a disk, with center $p$ and
radius $r$, that is fully crossed by $ab$, with center point $p$ to
the left of $ab$, and let $h$ be the (horizontal) distance from $p$ to
$ab$.  By our choice of $D_L$, we know that $h_L\geq h$.  Let $q\in D$
be any point in disk $D$.  Refer to Fig.~\ref{fig:claim}. Then, by
the triangle inequality,
$$|p_Lq| \leq |p_L p| + |pq|.$$
Also by the triangle inequality, since the points $p_L$ and $p$
project onto $ab$ at two points whose $y$-coordinates differ by at
most $|ab|$, we get
$$|p_L p| \leq |x(p_L)-x(p)| + |y(p_L) - y(p)| \leq (h_L-h) + |ab|,$$
where $x(\cdot)$ and $y(\cdot)$ denote $x$- and $y$-coordinates.
Thus, using the fact that $|pq|\leq r$ (since $q\in D$), we get that 
$$|p_Lq| \leq |p_L p| + |pq| \leq (h_L-h) + |ab| + r.$$
Again using the triangle inequality, we know that $|pa| \leq h +
|ab|$; thus, $h\geq |pa|-|ab|$.  Since disk $D$ is fully crossed by
$ab$, so that $a$ is not contained in $D$, we know that $|pa|\geq r$,
implying that
$$r-h \leq r - (|pa|-|ab|) \leq |ab|.$$
Putting this in the above inequality, we get
$$|p_Lq| \leq (h_L-h) + |ab| + r \leq h_L +|ab| +|ab| \leq r_L + 2|ab|,$$
implying that any $q\in D$ is at distance at most $r_L+2|ab|$ from the
center point $p_L$, as we claimed.

The proof for disks with center points to the right of $ab$ is symmetric.
\end{proof}

\begin {figure}[htbp!]
\centering
\begin{picture}(0,0)%
\includegraphics{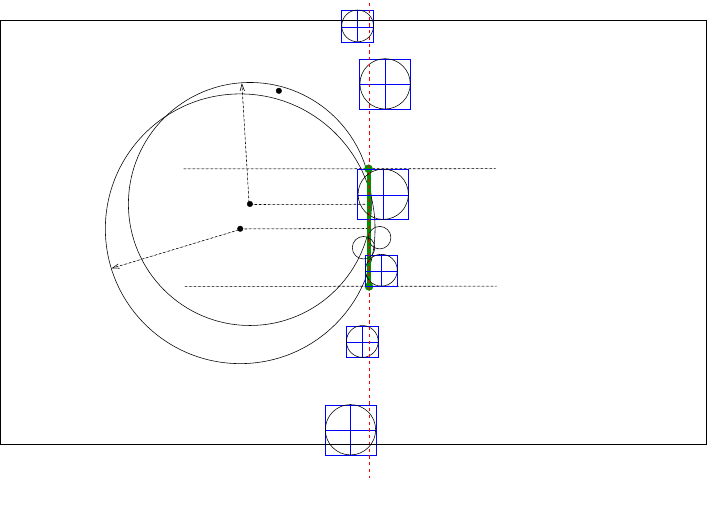}%
\end{picture}%
\setlength{\unitlength}{1184sp}%
\begingroup\makeatletter\ifx\SetFigFont\undefined%
\gdef\SetFigFont#1#2#3#4#5{%
  \reset@font\fontsize{#1}{#2pt}%
  \fontfamily{#3}\fontseries{#4}\fontshape{#5}%
  \selectfont}%
\fi\endgroup%
\begin{picture}(18869,13453)(654,-12806)
\put(9901,-12661){\makebox(0,0)[lb]{\smash{{\SetFigFont{10}{12.0}{\familydefault}{\mddefault}{\updefault}$\ell$}}}}
\put(10651,-11611){\makebox(0,0)[lb]{\smash{{\SetFigFont{10}{12.0}{\familydefault}{\mddefault}{\updefault}$b'$}}}}
\put(17476,-10786){\makebox(0,0)[lb]{\smash{{\SetFigFont{10}{12.0}{\familydefault}{\mddefault}{\updefault}$\rho$}}}}
\put(10576,239){\makebox(0,0)[lb]{\smash{{\SetFigFont{10}{12.0}{\familydefault}{\mddefault}{\updefault}$a'$}}}}
\put(10576,-3736){\makebox(0,0)[lb]{\smash{{\SetFigFont{10}{12.0}{\familydefault}{\mddefault}{\updefault}$a$}}}}
\put(10576,-7336){\makebox(0,0)[lb]{\smash{{\SetFigFont{10}{12.0}{\familydefault}{\mddefault}{\updefault}$b$}}}}
\put(8089,-1483){\makebox(0,0)[lb]{\smash{{\SetFigFont{7}{8.4}{\familydefault}{\mddefault}{\updefault}$q$}}}}
\put(8382,-4676){\makebox(0,0)[lb]{\smash{{\SetFigFont{7}{8.4}{\familydefault}{\mddefault}{\updefault}$h$}}}}
\put(8329,-5843){\makebox(0,0)[lb]{\smash{{\SetFigFont{7}{8.4}{\familydefault}{\mddefault}{\updefault}$h_L$}}}}
\put(7336,-4591){\makebox(0,0)[lb]{\smash{{\SetFigFont{7}{8.4}{\familydefault}{\mddefault}{\updefault}$p$}}}}
\put(7291,-3046){\makebox(0,0)[lb]{\smash{{\SetFigFont{7}{8.4}{\familydefault}{\mddefault}{\updefault}$r$}}}}
\put(6931,-5881){\makebox(0,0)[lb]{\smash{{\SetFigFont{7}{8.4}{\familydefault}{\mddefault}{\updefault}$p_L$}}}}
\put(5206,-6316){\makebox(0,0)[lb]{\smash{{\SetFigFont{7}{8.4}{\familydefault}{\mddefault}{\updefault}$r_L$}}}}
\end{picture}%

\caption{ Illustration of the proof of Claim~\ref{claim:1}.  The
  vertical cut $\ell$ intersects the rectangle $\rho$ in the segment
  $a'b'$.  The $m$-span ($m=4$), $ab$, is shown in thick green.  The
  disk $D_L$, centered at $p_L$, of radius $r_L$, is shown, as is
  another disk $D$, centered at $p$, with radius $r$.  The claim is
  that any point $q\in D$ must be covered by an enlarged disk, of
  radius $r_L+2|ab|$, centered on $p_L$.
\label{fig:claim}}
\end{figure}

We now modify the set ${\cal D}$ of disks as follows.  If there are
disks fully crossed by $ab$ with center to the left of $ab$, we select
one, $D_L$, having leftmost center point, and increase its radius to
$r_L+2|ab|$.  By Claim~\ref{claim:1}, we know that this new enlarged
disk covers all disks $D_i \in {\cal D}$ that were fully crossed by
$ab$, with center points $p_i$ to the left of $ab$.  This implies that
we can remove all such disks $D_i$ (i.e., shrink $r_i$ to zero), while
maintaining connectivity of the set of disks.  If there are disks
fully crossed by $ab$ whose center points are to the right of $ab$, we
similarly enlarge one of them, $D_R$, whose center point $p_R$ is
rightmost, while removing the others (shrinking their radii to zero).
The net change to the sum of the radii of all disks is that it goes up
by at most~$O(|ab|)$.

As in case (a) above, we know that the segments $a'a$ and $bb'$ each
intersect $O(m)$ disks.  After the modification above, we know that
$ab$ intersects at most $O(1)$ disks (at most 2 (enlarged) disks of
radii $r_L+2|ab|$ and $r_R+2|ab|$, and at most 6 disks containing $a$
and 6 disks containing $b$).  Thus, in total $a'b'=\ell\cap \rho$
intersects at most $O(m)$ disks of ${\cal D}$, so we know that $\ell$
is $m$-perfect with respect to ${\cal D}$ and $\rho$.
\old{
have $m$ points of intersection with disk boundaries (by the
definition of $m$-span for $E$), and that the points $a$ and $b$ each
lie within at most 6 disks. By the modification process above, all of
the disks of the set ${\cal D}$ that are fully crossed by $ab$ are
replaced by at most two disks (the enlarged disks, of radii
$r_L+2|ab|$ and $r_R+2|ab|$).  Thus, after the modification, the
number of disks intersected by $a'b'$ is at most $m+6+2+6=m+14$.
Thus, the cut $\ell$ is $m$-perfect with respect to the disks ${\cal
  D}$ (for constants $c_1\geq 1$, $c_2\geq 14$ in the definition of $m$-perfect).
}
\end{description}

\old{
\begin {figure}[htbp!]
\centering
\begin{picture}(0,0)%
\includegraphics{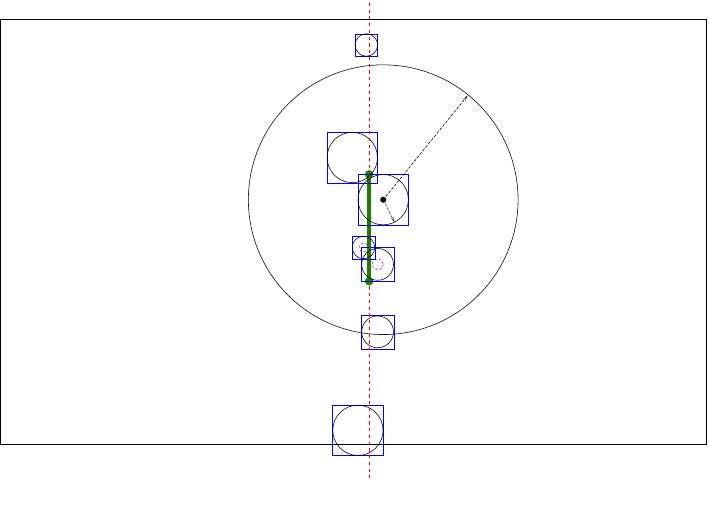}%
\end{picture}%
\setlength{\unitlength}{1184sp}%
\begingroup\makeatletter\ifx\SetFigFont\undefined%
\gdef\SetFigFont#1#2#3#4#5{%
  \reset@font\fontsize{#1}{#2pt}%
  \fontfamily{#3}\fontseries{#4}\fontshape{#5}%
  \selectfont}%
\fi\endgroup%
\begin{picture}(18869,13458)(654,-12811)
\put(10576,239){\makebox(0,0)[lb]{\smash{{\SetFigFont{10}{12.0}{\familydefault}{\mddefault}{\updefault}$a'$}}}}
\put(9901,-12661){\makebox(0,0)[lb]{\smash{{\SetFigFont{10}{12.0}{\familydefault}{\mddefault}{\updefault}$\ell$}}}}
\put(10651,-11611){\makebox(0,0)[lb]{\smash{{\SetFigFont{10}{12.0}{\familydefault}{\mddefault}{\updefault}$b'$}}}}
\put(17476,-10786){\makebox(0,0)[lb]{\smash{{\SetFigFont{10}{12.0}{\familydefault}{\mddefault}{\updefault}$\rho$}}}}
\put(12151,-3511){\makebox(0,0)[lb]{\smash{{\SetFigFont{7}{8.4}{\familydefault}{\mddefault}{\updefault}$r_i+|ab|$}}}}
\put(9826,-3886){\makebox(0,0)[lb]{\smash{{\SetFigFont{10}{12.0}{\familydefault}{\mddefault}{\updefault}$a$}}}}
\put(9751,-7186){\makebox(0,0)[lb]{\smash{{\SetFigFont{10}{12.0}{\familydefault}{\mddefault}{\updefault}$b$}}}}
\put(11026,-4861){\makebox(0,0)[lb]{\smash{{\SetFigFont{7}{8.4}{\familydefault}{\mddefault}{\updefault}$r_i$}}}}
\put(10651,-4411){\makebox(0,0)[lb]{\smash{{\SetFigFont{7}{8.4}{\familydefault}{\mddefault}{\updefault}$p_i$}}}}
\end{picture}%

\caption{
    An example of a $4$-span $ab$, shown in thick green, along vertical cut $\ell$. The $4$-span fully crosses the bounding square of the disk of radius $r_1$ centered on $p_1$.  Thus, we expand its radius to $r_1+|ab|$, and shrink the radii of the other two disks that intersect $ab$, to the (dashed) magenta circles.  In this way, the set of disks that are intersected by $ab$ are replaced with one larger disk (centered on $p_i$, of radius $r_i+|ab|$), and a set of smaller disks that do not cross the cut. 
\label{fig:m-span}}
\end{figure}
}

Applying the above modification of the disk radii to all cuts in the
recursive set of cuts associated with the $m$-guillotine edge set
$E'$, we obtain a set of disks ${\cal D}_m$.  Since the sum of all
$m$-span lengths $|ab|$ is charged off to the length $\lambda(E)=O(\sum_i r_i)$ of the
edge set $E$, totalling at most $O(1/m)$ times the sum $\sum_i r_i$,
we have shown that the sum of the
radii of disks of ${\cal D}_m$ is at most $(1+(C/m))\sum_i r_i$.

Finally, by Lemma~\ref{lem:round}, we can round up the radii of the
resulting disks ${\cal D}_m$ to make them ${\cal R}$-disks, while
increasing the sum of radii by at most a factor $1+O(1/m)$.
\end{proof}

\old{
Next, we observe that we can assume that no point of the plane lies
within more than $O(1)$ disks of ${\cal D}$.  (If some point $p$
lies within more than 6 disks of ${\cal D}$, then we know that two
of them have centers within a 60-degree cone with apex $p$; thus, the
distance between the two centers is less than the larger of the two
radii, implying that we can shrink the radius of the larger disk, so
that $p$ no longer lies within it, while keeping the set of disks 
connected.)
}

\old{
Now, let $\rho$ denote the axis-aligned bounding box of the resulting
disks.  If there are no disks strictly interior to $\rho$ (i.e.,
if all disks intersecting $\rho$ intersect (touch) the boundary,
$\partial\rho$, of $\rho$), then we are done: the set of disks is,
by definition, $m$-guillotine already.  Thus, we assume that there is
at least one disk of ${\cal D}$ strictly interior to $\rho$.

{\em Case 1:} If there exists an $m$-perfect cut of $\rho$, by a horizontal or
vertical line intersecting at most $2m$ disks interior to $\rho$, then we
partition $\rho$ with it and recurse on the two boxes on each side of
the cut.  We observe that if such an $m$-perfect cut exists, then one
exists with the property that the defining coordinate of the cut is
from the set ${\cal I}_x$ (for vertical cuts) or ${\cal I}_y$ (for
horizontal cuts), since we can translate the cut between two
consecutive such coordinates without changing the set of disks it
intersects.  
%

{\em Case 2:} Thus, we now assume that no $m$-perfect cut exists for partitioning
$\rho$, and that $\rho$ contains in its interior at least one disk.
Consider the set of bounding boxes (squares) of the disks that are interior to $\rho$;
let $E$ be the set of boundary edges of these squares.  The edges $E$ form a network of
horizontal/vertical segments interior to $\rho$, of total length $\lambda(E)$.  
By the usual $m$-guillotine argument~\cite{m-gsaps-99}, we know that there exists a vertical
or horizontal cut, $\ell$, through $\rho$ such that 
}

\old{
In this case, we would like to show that a {\em favorable} cut exists.
In order to define such cuts, we need to introduce two important
notions: the {\em cost} of a cut and the {\em chargeable length} 
of a cut.

For a vertical line, $\ell_x$, through coordinate $x$, let $f(x)$
denote the length of the {\em $m$-span} of $\ell_x$ with respect to
${\cal D}$ and $\rho$: $f(x)=0$ if, within $\rho$, $\ell_x$ intersects
at most $2m$ disks of ${\cal D}$; otherwise, if $\ell_x$ intersects
$K>2m$ disks of ${\cal D}$ within $\rho$, then $f(x)$ is the distance
(along $\ell_x$) from the first point, $a_m$, where $\ell_x$ enters
into the (axis-aligned) bounding box of the $m$th disk, going from the top boundary of $\rho$ downwards
along $\ell_x$, to the first point, $b_m$, where $\ell_x$ enters into
the bounding box of the $m$th disk, going from the bottom boundary of $\rho$ upwards along
$\ell_x$.  Because $K>2m$, we know that $a_m$ is above $b_m$ and that
there must be at least one disk of ${\cal D}$ whose bounding box
intersects $\ell_x$ is a subset of the {\em bridge segment} $a_mb_m$.
We similarly define $g(y)$ to be the length of the (horizontal)
bridge segment along a horizontal cut $\ell_y$
through coordinate $y$.  

We think of $f(x)$ and $g(y)$ as the cost of augmentation for the
network ${\cal N}_\rho$ consisting of the union of disks, truncated
within $\rho$, that are the boundaries of the disks ${\cal D}$; by
adding segments (bridges) of length $f(x)$ (resp., $g(y)$), a vertical
cut $\ell_x$ (resp., horizontal cut $\ell_y$) can be made $m$-perfect.

We claim that we can charge off the lengths of the bridges that would
suffice to augment the network ${\cal N}_\rho$ to make it
$m$-guillotine, in the usual sense of an $m$-guillotine network
(subdivision), as in \cite{m-gsaps-99}.  Specifically, we argue
that we can select cuts for which the bridge lengths ($m$-spans) can
be charged off to the total length of the network (sum of the disk 
circumferences, which is $O(\sum_i r_i)$), showing that the sum of the
bridge lengths is at most $(C/m)\sum_i r_i$.

We partition each disk (bounding the disks ${\cal D}$) into four
90-degree arcs: two ``vertical arcs'' (with angular ranges (-45,45)
and (135,225)) and two ``horizontal arcs'' (with angular ranges
(45,135) and (225,315)).  We define the ``chargeable length'' of a
vertical cut $\ell_x$ to be the ``$m$-dark'' length of $\ell_x\cap
\rho$.  Specifically, a subsegment $ab$ of $\ell_x\cap \rho$ is said
to be $m$-dark with respect to ${\cal N}_\rho$ if for any $p\in ab$, the
rightwards and leftwards rays from $p$ each cross at least $m$
vertical arcs of ${\cal N}_\rho$ before exiting $\rho$.  If we cut
$\rho$ along $\ell_x$, then the $m$-dark portion of the cut can be
charged off to the left/right sides of the vertical arcs of ${\cal
N}_\rho$ lying to the right/left of $\ell_x$, distributing the charge
to be ($1/m$)th to each of the $m$ arcs first hit. 

A vertical cut $\ell_x$ or horizontal cut $\ell_y$ for $\rho$ is favorable if  the
chargeable length of the cut is at least as long as the cost ($f(x)$ or $g(y)$) of the cut.

\begin{lemma}
For any network ${\cal N}_\rho$, a favorable cut exists.
\end{lemma}
\begin{proof}
Our charging scheme is based on the observation, following the method
of \cite{m-gsaps-99}, that there must exist a favorable vertical
cut $\ell_x$ or horizontal cut $\ell_y$ for $\rho$ such that the
chargeable length of the cut is at least as long as the cost ($f(x)$
or $g(y)$) of the cut.  The existence of a favorable cut follows from
the observation that $\int_{x\in \rho} f(x) dx = \int_{y\in \rho} h(y)
dy$, where $h(y)$ is the chargeable length associated with the
horizontal cut $\ell_y$; thus, assuming, without loss of generality,
that $\int_{x\in \rho} f(x) dx \geq \int_{y\in
\rho} g(y) dy$, we see that there must exist a value $y^*$ where
$g(y^*)\leq h(y^*)$, which defines a favorable horizontal cut
$\ell_{y^*}$ for which the chargeable length exceeds the length of the
$m$-span.  (In case $\int_{x\in \rho} f(x) dx <
\int_{y\in \rho} g(y) dy$, there exists a favorable vertical cut.)
Further, we claim that there must exist a favorable cut corresponding
to the coordinate sets ${\cal I}_x$, ${\cal I}_y$: the chargeable
length of cut $\ell_{y^*}$ does not change as we perturb $y^*$ to the
nearest coordinate in the set ${\cal I}_y$, while, by convexity of the
circular arcs, the length of the $m$-span associated with a cut is
locally minimized at endpoints of the intervals defined by the points
of ${\cal I}_y$.
\end{proof}

Once a favorable cut is found with respect to rectangle $\rho$, the
cut partitions the problem into two subrectangles, and the argument is
recursively applied to each.  Since each circular arc of length
$\lambda$ is charged for length at most $\lambda/2m$ on each of its
two sides, we get that the overall length of all $m$-spans that are
associated with favorable cuts constructed recursively in converting
the network ${\cal N}_{BB({\cal D})}$ to an $m$-guillotine network is
at most $(C/m)\sum_i r_i$, for a constant $C$.

Finally, we claim that the disks ${\cal D}$ can be converted to an
$m$-guillotine set ${\cal D}_m$, having sum of radii at most
$(C/m)\sum_i r_i$: Associated with each $m$-span $a_mb_m$ that is
added to the network of circular arcs in order to make the network
$m$-guillotine, we enlarge the radius of the disk defining one of the
$m$-span endpoints (say, $a_m$), by at most $|a_mb_m|$, so that this
disk now covers the entire $m$-span segment.  Since we have enlarged
one disk in a set of disks whose union is connected, the union remains
connected.  Thus, with a total increase in radii of at most
$(C/m)\sum_i r_i$, we end up with an $m$-guillotine set ${\cal D}_m$
of disks, proving our structure theorem. 
\end{proof}
}

\subsection{Dynamic Programming}
We now give an algorithm to compute a minimum-cost (sum of radii)
$m$-guillotine set of ${\cal R}$-disks whose union is connected.  The
algorithm is based on dynamic programming.  A subproblem is specified
by a rectangle, $\rho$, with $x$- and $y$-coordinates among the sets
${\cal I}_x$ and ${\cal I}_y$, respectively, of discrete coordinates.
The subproblem includes specification of {\em boundary information},
for each of the four sides of $\rho$.  Specifically, the boundary
information includes:
\begin{description}
\item[(i)] $O(m)$ ``portal disks'', which are ${\cal
R}$-disks intersecting the boundary, $\partial \rho$, of $\rho$, with
at most $O(m)$ disks specified per side of $\rho$; and, 
\item[(ii)] a connection pattern, specifying which subsets of the portal disks are
required to be connected within $\rho$.  (Thus, the connection pattern is specified by
giving a partition of the set of portal disks; while the number of partitions is exponential in $m$,
it is constant for fixed $\epsilon=\lceil 2/m\rceil$.)
\end{description}
There are a polynomial number of subproblems (specifically,
$n^{O(m)}$), for any fixed $m$.  For a given subproblem, the dynamic
program optimizes over all (polynomial number of) possible cuts $\ell$
(horizontal at ${\cal I}_y$-coordinates or vertical at ${\cal
  I}_x$-coordinates), and choices of up to $O(m)$ ${\cal R}$-disks
intersecting the cut segment $\ell\cap \rho$, along with all possible
compatible connection patterns for each side of the cut.  (The optimal
value of the objective function for each of the two corresponding
subproblems on each side of each candidate cut $\ell$ have been
precomputed and tabulated, since we fill in the table of data in order
of increasing (combinatorial) size of the rectangles $\rho$.)  The
result is an optimal $m$-guillotine set of ${\cal R}$-disks such that
their union is connected and the sum of the radii is minimum possible
for $m$-guillotine sets of ${\cal R}$-disks.  Since we know, from the
structure theorem, that an optimal set of disks centered at points $P$
can be converted into an $m$-guillotine set of ${\cal R}$-disks
centered at points of $P$, whose union is connected, and we have
computed an optimal such structure, we know that the disks obtained by
our dynamic programming algorithm yield an approximation to an optimal
set of disks.  In summary, we have shown the following result:

\begin{theorem}\label{thm:PTAS}
For any $\epsilon > 0$, there is an approximation algorithm for CRA
running in time $n^{O(1/\epsilon)}$ that produces an approximate
solution within factor $(1+\epsilon)$ of the optimal.
\end{theorem}

\old{
Joe wrote to Sandor:
On a related note: what do you think about the ``omission'' I did?  (I removed the short subsection about the bounded radius PTAS, for the
special case (assumed that there was a connected path of disks between any two points, with sum of radii a constant factor greater than the distance between the two points, for any pair of points))
The issue is this:  Adding the details to the level we have now for the other PTAS would be very time consuming, with marginal payoff (since it still does not solve the ``real'' question, without the extra assumption).
However, omitting it means that what we currently have is a PTAS for the unbounded radius case, for which we do not even have a hardness proof.....
(this raises the question of whether the hardness proof for geometric bounded radius case satisfies the ``assumption'' that was needed for the PTAS: ie, 
does that PTAS solve a problem we know to be hard?)

I fully believe the PTAS, but to make the details complete and convincing would take
too much time/effort, and it only solves a special case.
Instead, I will try later to solve the general case, without the special assumption,
and that result may be able to stand on its own.}
\old{
\subsection{A PTAS for CRA with Bounded Radii}

We now address the case of bounded radii, in which disk $i$ has a
maximum allowable radius, $\bar r_i<\infty$.  The PTAS given above
relied on disk radii being arbitrarily large, so that we could
increase the radius of a single disk to cover the entire $m$-span
segment $ab$.  A slightly different argument is needed for the case of bounded
radii.

\old{
Look at a cut, and consider the m-span.
Look at what COULD be covered if every disk that currently intersects the m-span were increased
to its maximum possible size.  This may give several components (only one in the unbounded radius case).
We will consider the m-span to be ``maximally covered'' by disks if they cover all of this portion of the span.
Claim: we can augment radii, totalling an increase of at most O(|m-span|), so that the m-span is maximally covered.

In the DP, we just specify that the m-span is maximally covered.
Wait!!  This does not yet work:  issue is that the maximal coverage could include a very large radius
disk that barely touches the m-span, and this is not even touched/covered by any of the disks of the input
set.  So, to cover as much as maximal does, means that we may have to grow a HUGE disk??

9/24/14: idea: Claim: we can make maximal with respect to all disks of radii O(|ab|)
that potentially meet the m-span ab.  There could still be many other disks of larger radius
that meet ab at various nearly-isolated points, but we do not need to know about all of them
for purposes of solving the subproblem.
Make this more precise!

Look at the set U that is the union of all largest disks.
(issue: there could be one point far away with an infinite radius.... that make U trivially cover everything!)
Rather, look at the graph G that connects 2 points with an edge iff their maximum-radius disks
intersect.  
Consider G with Euclidean lengths on edges.
}

We obtain a PTAS for the bounded radius case, if we make an additional
assumption: we assume that for any segment $pq$, with $p$ and $q$ each within
disks of allowable radii, there exists a connected set of disks,
centered at points of $p_i\in P$ and having allowable radii $r_i\leq \bar r_i$,
such that $p$ and $q$ each lie within the union of the disks and the
sum of the radii of the disks is $O(|pq|)$.

We highlight only the differences
from the unbounded radius case. The PTAS method proceeds as above in
the unbounded radius case, except that we now modify the proof of the
structure theorem by replacing each $m$-span bridge $a_mb_m$ by a
shortest connected path of allowable ${\cal R}$-disks.  
We call such a set of disks a {\em bridging disk path}. 
We know, from our
additional assumption, that the sum of the radii along such a shortest
path is $O(|a_mb_m|)$, allowing the charging scheme to proceed as
before.  The dynamic programming algorithm changes some as well, since
now the subproblem specification must include the bridging
disk-path, which is specified only by its first and last disk (those
associated with the bridge endpoints $a_m$ and $b_m$); the path
itself, which may have complexity $\Omega(n)$, is implicitly
specified, since it is a shortest path (which we can assume to be
unique, since we can specify a lexicographic rule to break ties).
%

In summary, we have

\begin{theorem}\label{thm:PTAS-bounded}
There is a PTAS for CRA with
bounded radii, assuming that for any segment $pq$, with $p$ and
$q$ within feasible disks, there exists a (connected) path of feasible
disks whose sum of radii is $O(|pq|)$.
\end{theorem}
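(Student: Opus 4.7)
The plan is to follow the PTAS framework of Theorem~\ref{thm:PTAS}, modifying only the two places where the unbounded-radius argument breaks: the structure theorem and the dynamic program. First, I would observe that Lemma~\ref{lem:round} still applies when we restrict each $p_i$ to candidate radii in ${\cal R} \cap [0, \bar r_i]$: since $R^* \geq D/2$, rounding each $r^*_i$ up to the next multiple of $D/mn$, capped at $\bar r_i$, remains feasible and costs at most $\epsilon R^*$. The allowable ${\cal R}$-circles are thus simply pruned by the upper bounds, and the $m$-perfect cut and bounding-box machinery carry over unchanged.

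Next, I would redo Theorem~\ref{thm:structure}. The only place that invoked unbounded radii is the final step, where an $m$-span $a_m b_m$ is absorbed by enlarging one incident circle. Here I would instead appeal to the extra hypothesis: since $a_m$ and $b_m$ each lie inside feasible circles, there exists a connected path $\pi_{a_mb_m}$ of feasible circles (centered at input points) with total radius $O(|a_m b_m|)$. Substituting each such path for the enlarged circle gives an augmented, still connected, $m$-guillotine set of ${\cal R}$-circles. The original charging scheme, which charges $m$-span cost against the vertical/horizontal arcs that the cut crosses, now charges the $O(|a_m b_m|)$ cost of each bridging path against the same arcs; the total blow-up remains $(1 + C/m)\sum_i r_i$ for a modestly larger constant $C$.

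The dynamic program then follows the same outline as in the unbounded case, but the subproblem boundary information gains a bridging-circle-path descriptor: for each cut the DP guesses only the first and last circle of each inserted bridging path, while the path itself is defined implicitly as the canonical shortest ${\cal R}$-circle path between those endpoints, using a fixed lexicographic rule to break ties. The canonical paths and their costs are precomputed once (e.g.\ by a Dijkstra-type search on the graph of ${\cal R}$-circles). This keeps the subproblem state polynomial in $n$ for fixed $m$; the DP combines optima on either side of each candidate cut, pays the precomputed cost of every guessed bridging path, and enforces the portal connection pattern.

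The main obstacle I expect is bookkeeping: a canonical bridging path may share circles with the optimum of a child rectangle, or with another bridging path across an ancestor cut, so naive summation will double-count radii and may request two different radii at the same point. I would address this by defining $r_i$ to be the maximum over all subproblems that touch $p_i$ (bridging contributions included), and by letting the combine step pay only the increment that a new bridging path adds above what the child subproblems already used at each shared center. Verifying that this maximum-based accounting still yields a $(1+\epsilon)$ factor against $R^*$, while preserving the polynomial state space, is the technically delicate part of the argument and the place where the additional $O(|pq|)$-feasibility assumption does the real work.
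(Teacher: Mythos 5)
Your proposal matches the paper's own (sketched) argument essentially step for step: the structure theorem is repaired by replacing each $m$-span bridge $a_mb_m$ with a connected path of feasible ${\cal R}$-circles of total radius $O(|a_mb_m|)$ guaranteed by the extra hypothesis, and the dynamic program is extended so that each bridging circle-path is specified only by its first and last circle, the path itself being the canonical (lexicographically tie-broken) shortest one. Your final paragraph on avoiding double-counting of shared circles is a reasonable refinement of details the paper's sketch leaves implicit, but it does not change the approach.
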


}

\section{Experimental Results}\label{sec:exp}

\begin {figure}[h!]
  \centering
  \includegraphics[width=0.6\columnwidth]{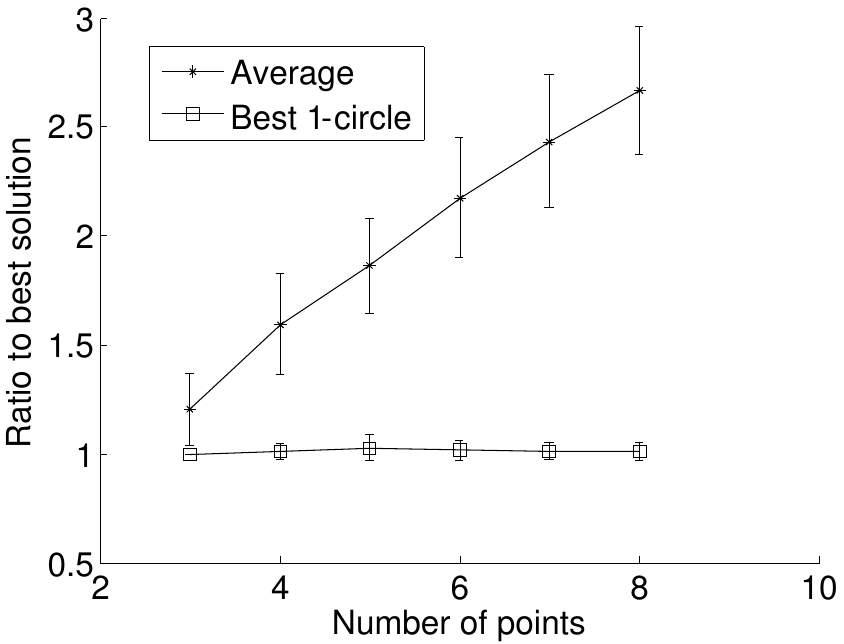}
  \caption{
    Ratios of the average 
    over all enumerated trees and of the best 1-disk tree
    to the optimal $\sum r_i$.  Results were 
    averaged over 100 trials for each number. 
    \label{exp_ratios}
  }
\end{figure}

It is interesting that even in the worst case, a one-disk
solution is close to being optimal. This is supported
by experimental evidence. 
Using the MATLAB programming language, we analyzed the difficulty 
associated with finding the optimal assignment of radii in randomized trials.  
In order to generate
random problem instances, we considered different
numbers of points, uniformly distributed in a two-dimensional circular region. 
For each trial considering a single distribution of points, we enumerated all possible
spanning trees using the method described in~\cite{Avis96_reverse}
and recorded the optimal value with the algorithm described in Section 2. 
Additionally, for each trial, we noted the sum of $R$ values obtained from the best one-disk solution.  
Fig.~\ref{exp_ratios} shows that the ratio of the best one-disk solution to the best solution remains relatively flat as
the number of points increases, as opposed to the average solution, suggesting that the one-disk solution is an excellent
heuristic choice. These results were obtained in less than 6 hours using an i7 PC.

\section{Conclusion}\label{sec:conc}

A number of open problems remain. One of the most puzzling is the issue of
complexity in the absence of upper bounds on the radii.
The strong performance of the one-disk solution (and even better
of solutions with a higher, but limited, number of disks), and the difficulty
of constructing solutions for which the one-disk solution is not optimal
strongly hint at the possibility of the problem being polynomially solvable.
Another indication is that our positive results for one or two disks 
only needed triangle inequality, i.e., they did not explicitly make
use of geometry. 

One possible approach to this problem may be to use methods from linear programming. Modeling
the objective function and the variables is straightforward;
describing the connectivity of a spanning tree by linear cut constraints
is also well known. However, even though separating over 
the exponentially many cut constraints is polynomially solvable (and hence optimizing
over the resulting polytope), the overall polytope is not necessarily integral.
On the other hand, we have been unable to prove NP-hardness without upper bounds
on the radii, even in the more controlled context of graph-induced distances.
Note that some results were obtained by means of linear programming:
the tight lower bound for 2-disk solutions (shown in Fig.~\ref{2_disk_strict}) 
was found by solving appropriate LPs.

Other open problems are concerned with the worst-case performance of heuristics
using a bounded number of disks. We showed that two disks suffice for
a $\frac{4}{3}$-approximation in general, and a $\frac{5}{4}$-approximation
on a line; we conjecture that the general performance guarantee can be
improved to $\frac{5}{4}$, matching the existing lower bound. Obviously,
the same can be studied for $k$ disks, for any fixed $k$; at this point,
the best lower bounds we have are $\frac{7}{6}$ for $k=3$ and 
$1+\frac{1}{2^{k+1}}$ for general $k$.
We also conjecture that the worst-case ratio $f(k)$ of a best $k$-disk solution
approximates the optimal value arbitrarily well for large $k$, i.e.,
 $\lim_{k\to\infty}f(k) = 1$.

We have given a PTAS for CRA in two dimensions with unbounded radii, a
problem with unknown complexity -- there might be a polynomial-time
exact solution.  Does there exist a PTAS for the general case of CRA
with bounded radii, a problem we have shown to be NP-hard?  In the
conference version of the paper~\cite{cfh-cscmsr-11}, we sketched an
approach for how to modify our PTAS for unbounded radii to address a
special case of the CRA problem with bounded radii, if an additional
assumption is made, that for any segment $pq$, with $p$ and $q$ within
feasible disks, there exists a (connected) path of feasible disks
whose sum of radii is $O(|pq|)$.  It would be interesting to obtain a
PTAS for the CRA for the general case of bounded radii.

\section*{Acknowledgments}

We thank Ferran for being a great inspiration to all of us.
A preliminary version of this work appears in the Algorithms and Data Structures Symposium (WADS), 2011~\cite{cfh-cscmsr-11}.
This work was started during the 2009 McGill/INRIA/University of Victoria Bellairs Workshop on Computational Geometry.
We thank all other participants for contributing to the great atmosphere.
This work has been partially supported by the National Science Foundation (grants CCF-1054779 and IIS-1319573, Erin Chambers; grants CCF-1018388 and CCF-1526406, Joseph Mitchell), by the Binational Science Foundation (BSF 2010074, Joseph Mitchell), Sandia National Labs (Joseph Mitchell), and three individual NSERC Discovery grants (one each
for Venkatesh Srinivasan, Ulrike Stege, and Sue Whitesides).

\section*{References}
\bibliographystyle{abbrv}
\bibliography{lit,refs}

\end{document}